\newcommand{\algorithmiccommentt}[1]{\colorbox{black!10}{#1}}
\newcommand{\LineIf}[2]{ \STATE \algorithmicif\ {#1}\ \algorithmicthen\ {#2} }
\DeclareMathOperator{\E}{\mathbb{E}}
\newcommand{\subsetsum}{\textsc{Subset Sum}}
\newcommand{\knapsack}{\textsc{Knapsack}}
\newcommand{\B}{\mathcal{B}}
\newcommand{\Z}{\mathbb{Z}}
\newcommand{\Fs}{\mathscr{F}}
\newcommand{\cF}{\mathcal{F}}
\newcommand{\cL}{\mathcal{L}}
\newcommand{\cR}{\mathcal{R}}
\DeclareMathOperator{\poly}{poly}
\newcommand{\Os}{O^*\!}
\newcounter{openquestion}
\newenvironment{openquestion}
{\begin{center}\begin{minipage}{.95\textwidth}\begin{framed}
        \refstepcounter{openquestion}
        \textbf{Open Question~\theopenquestion:}
      }
{\end{framed}\end{minipage}\end{center}}
\newtheorem{theorem}{Theorem}[section]
\newtheorem{lemma}[theorem]{Lemma}
\newtheorem{definition}[theorem]{Definition}
\newtheorem{proposition}[theorem]{Proposition}
\newtheorem{claim}[theorem]{Claim}
\newtheorem{fact}[theorem]{Fact}
\newtheorem{corollary}[theorem]{Corollary}
\numberwithin{equation}{section}
\newtheorem*{rep@theorem}{\rep@title}
\newcommand{\newreptheorem}[2]{%
\newenvironment{rep#1}[1]{%
 \def\rep@title{#2 \ref{##1}}%
 \begin{rep@theorem}[restated]}%
 {\end{rep@theorem}}}
\begin{document}

\author{Per Austrin \and Petteri Kaski \and Mikko Koivisto \and Jesper Nederlof}
\title{
Dense Subset Sum may be the hardest\\[2mm]{}
}

\maketitle
\setcounter{page}{0}
\thispagestyle{empty}

\begin{abstract}
The \subsetsum{} problem asks whether a given set of $n$ positive integers contains a subset of elements that sum up to a given target $t$.
It is an outstanding open question whether the $\Os(2^{n/2})$-time 
algorithm for \subsetsum{} by Horowitz and Sahni [J.~ACM~1974] can be beaten in the worst-case setting by a ``truly faster'', $\Os(2^{(0.5-\delta)n})$-time algorithm, with some constant $\delta > 0$.
Continuing an earlier work [STACS 2015], we study \subsetsum{} parameterized by the maximum bin size $\beta$, defined as the largest number of subsets of the $n$ input integers that yield the same sum.  For every $\epsilon > 0$ we give a truly faster algorithm for instances with $\beta \leq 2^{(0.5-\epsilon)n}$, as well as instances with $\beta \geq 2^{0.661n}$.
Consequently, we also obtain a characterization in terms of the popular density parameter $n/\log_2 t$: if all instances of density at least $1.003$ admit a truly faster algorithm, then so does every instance. This goes against the current intuition that instances of density 1 are the hardest, and therefore is a step toward answering the open question in the affirmative.
Our results stem from novel combinations of earlier algorithms for \subsetsum{} and a study of an extremal question in additive combinatorics connected to the problem of Uniquely Decodable Code Pairs in information theory.
\end{abstract}

\noindent
{\em Author affiliations and addresses:}

\bigskip

\noindent
Per Austrin\\
School of Computer Science and Communication\\
KTH Royal Institute of Technology, Sweden\\
\texttt{austrin@csc.kth.se}\\

\medskip
\noindent
Petteri Kaski\\
Helsinki Institute for Information Technology HIIT 
\& Department of Computer Science\\
Aalto University, Finland\\
\texttt{petteri.kaski@aalto.fi}\\

\medskip
\noindent
Mikko Koivisto\\
Helsinki Institute for Information Technology HIIT \& Department of Computer Science\\
University of Helsinki, Finland\\
\texttt{mikko.koivisto@helsinki.fi}\\

\medskip
\noindent
Jesper Nederlof\\
Department of Mathematics and Computer Science\\
Technical University of Eindhoven, The Netherlands\\
\texttt{j.nederlof@tue.nl}\\

\clearpage

\section{Introduction}

The \subsetsum{} problem and its generalization to the \knapsack{} problem are two of the most famous NP-complete problems. In the \subsetsum{} problem, we are given positive integers $w_1,w_2,\ldots,w_n,t \in \Z$ as input, and need to decide whether there exists a subset $X\subseteq[n]$ with $\sum_{j\in X} w_j=t$. In the \knapsack{} problem, we are additionally given integers $v_1,v_2,\ldots,v_n$ and are asked to find a subset $X\subseteq [n]$ maximizing $\sum_{j \in X}v_j$ subject to the constraint $\sum_{j \in X}w_j \leq t$. While the study of \subsetsum{} is, among others, motivated by cryptographic applications or balancing problems, \knapsack{} has numerous applications in combinatorial optimization.
%
%
We study the exact worst-case time complexity of these problems. The earliest and probably most important algorithms for both problems are simple applications of dynamic programming, pioneered by Bellman~\cite{Bellman57}, solving both problems in $\Os(t)$ time (where the $\Os(\cdot)$ notation suppresses factors polynomial in the input size). In terms of $n$, the best algorithms for both problems are due to Schroeppel and Shamir~\cite{SchroeppelShamir81}, using $\Os(2^{n/2})$ time and $\Os(2^{n/4})$ space, based on the \emph{meet-in-the-middle} technique by Horowitz and Sahni~\cite{HorowitzSahni74}. Nederlof et al.~\cite{DBLP:conf/mfcs/NederlofLZ12} show that there is an $\Os(T^n)$-time, $\Os(S^n)$-space algorithm for \subsetsum{} if and only if there is an $\Os(T^n)$-time, $\Os(S^n)$-space algorithm for \knapsack{}. A major open question since the paper by Horowitz and Sahni~\cite{HorowitzSahni74} is whether we can do ``truly faster'' for both problems:

\begin{openquestion}
  \label{q:fast}
  Can \subsetsum{} be solved in $\Os\big(2^{(0.5 - \delta) n}\big)$ time for some constant $\delta > 0$?
\end{openquestion}

In this paper we discuss Monte Carlo
algorithms in the following sense: the algorithm never returns false positives and constructs solutions of yes-instances with at least inverse polynomial probability. All randomized algorithms discussed in this paper are of this type, 
but for Open Question~\ref{q:fast} we would be satisfied
with two-sided error as well.

Zooming out, one motivation of this question is as follows. It is commonly believed that there are no polynomial time or even sub-exponential time algorithms for \subsetsum{}. So how fast can the fastest algorithm be? It would be an elegant situation if the simple meet-in-the-middle algorithm was optimal. But this would also be quite surprising, and  so we aim to show that at least this is not the case.

In 2010, Howgrave-Graham and Joux~\cite{HowgraveGrahamJoux10} gave an algorithm that answered Open Question 1 in the affirmative in an \emph{average case} setting. To state their result, let us describe the setting where it applies.  The \emph{density} of a $\subsetsum{}$ instance is defined as $n / \log_2 t$. A random instance of density $d>0$ is constructed by fixing $t\approx 2^{n/d}$ and picking the integers $w_1, \ldots, w_n, t$ independently and uniformly at random between $1$ and $2^{n/d}$. Howgrave-Graham and Joux~\cite{HowgraveGrahamJoux10} showed that random instances of density $1$ can be solved in $\Os(2^{0.311n})$ time, and later this has been improved to $\Os(2^{0.291n})$ time by Becker et al.~\cite{BeckerCoronJoux11}. These results resolve Open Question 1 in the average case setting since Impagliazzo and Naor~\cite{impagliazzonaor} showed that random instances are the \emph{hardest when they have density $1$}. Indeed, a vast body of research has given better algorithms for random instances with density deviating from $1$, like reductions of sparse instances to the shortest vector problem (e.g.~\cite{DBLP:journals/jacm/LagariasO85,Coster92improvedlow-density}) and the algorithm by Flaxman and Przydatek~\cite{DBLP:conf/stacs/FlaxmanP05}.

The algorithms discussed thus far all use exponential space, which can be a serious bottleneck. Therefore many studies also emphasize the setting where the algorithm is restricted to using polynomial space. %
It is known that the running time of the dynamic programming based
algorithms can be achieved also in polynomial space: Lokshtanov and
Nederlof~\cite{LokshtanovN10} give polynomial space algorithms solving
\subsetsum{} in $\Os(t)$ time and \knapsack{} in
pseudo-polynomial time.  On the other hand, in terms of $n$, no polynomial space algorithm significantly faster than na\"\i{}vely going through all $2^n$ subsets is known, and the following has been stated as an open problem by a number of researchers~(see e.g.~\cite{Woeginger2008,openprob}):
\begin{openquestion}
\label{q:polyspace}
Can \subsetsum{} be solved in polynomial space and $\Os\big(2^{(1-\delta) n}\big)$ time for some constant $\delta > 0$?
\end{openquestion}

\subsection{Our results}

We aim to make progress on Open Question~\ref{q:fast}, and show that a large class of instances can be solved truly faster. An optimist may interpret this as an indication that truly faster algorithms indeed exist, while a pessimist may conclude the remaining instances must be the (strictly) hardest instances.

\paragraph{Algorithmic Results}

To define classes of instances that admit truly faster algorithms, we consider several natural parameters. The key parameter that seems to capture the range of our algorithmic technique the best is the \emph{maximum bin size} $\beta(w) = \max_{x\in\mathbb{Z}}|\{S \subseteq [n]: \sum_{i \in S} w_i = x\}|$. Our main technical result is:

\begin{theorem}\label{thm:smallbin}
  There exists a Monte Carlo algorithm that, for
  any $0\le \epsilon \le 1/6$, solves all instances of \subsetsum{}
  with $\beta(w)\leq 2^{(0.5-\epsilon)n}$ in
  $\Os\big(2^{(0.5-\epsilon/4+3\epsilon^2/4)n}\big)$ time.
\end{theorem}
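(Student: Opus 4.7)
The plan is to combine an unbalanced meet-in-the-middle with a modular enumeration that leverages the inherited bin bound. Introduce a parameter $\alpha\in(0,1/2]$ to be tuned as a function of $\epsilon$, and partition $[n]$ into $L$ of size $\lceil(1/2+\alpha)n\rceil$ and $R$ of size $\lfloor(1/2-\alpha)n\rfloor$. A first easy observation is that the bin bound is inherited on each side: extending $k$ equisummed subsets of $L$ (or $R$) by the empty subset on the opposite side produces $k$ equisummed subsets of $[n]$, so $\beta(w|_L),\beta(w|_R)\le\beta(w)\le 2^{(1/2-\epsilon)n}$.

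I would first enumerate the subset sums of the smaller side $R$ in time $\Os(2^{(1/2-\alpha)n})$ and store them in a sorted table keyed by sum. The large side $L$ carries $2^{(1/2+\alpha)n}>2^{n/2}$ subsets, too many to enumerate directly, so the main work goes into a bucketed enumeration: choose a random prime $p\approx 2^{\gamma n}$ and generate $L$-subsets by their residue class modulo $p$. Conditioned on the inherited bin bound on $L$ and a supporting additive-combinatorics estimate, typical residue classes should have cardinality close to the average $2^{(1/2+\alpha-\gamma)n}$ with no catastrophically large class. For each $R$-sum $s_R$, compute the required residue $(t-s_R)\bmod p$ and match $s_R$ only against those $L$-subsets whose sum falls in that residue class.

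The running-time analysis should balance three quantities: (i) the $R$-enumeration cost $\Os(2^{(1/2-\alpha)n})$, (ii) the cost of generating the residue-indexed $L$-enumeration, and (iii) the total lookup cost proportional to $2^{(1/2-\alpha)n}\cdot 2^{(1/2+\alpha-\gamma)n}=2^{(1-\gamma)n}$, subject to the admissibility constraint on $\gamma$ arising from $\log_2\beta\le(1/2-\epsilon)n$. Solving the resulting constrained optimization, I expect the optimum to occur at $\alpha,\gamma$ of order $\epsilon$, and the exponent to take the claimed form $1/2-\epsilon/4+3\epsilon^2/4$; the second-order correction $3\epsilon^2/4$ naturally emerges from the quadratic interplay between the balancing equations and the bin constraint, and the admissible range $\epsilon\le 1/6$ should correspond precisely to the regime in which this correction stays negative.

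The main obstacle is step (ii): converting the combinatorial bin bound $\beta(w|_L)\le 2^{(1/2-\epsilon)n}$ into a statement about how $L$-subset sums distribute among residue classes modulo $p$, so that no single class absorbs a pathological share of $L$-subsets. One would like an additive-energy or Pl\"unnecke--Ruzsa-type argument showing that a heavily concentrated residue class would force many equisummed pairs of subsets, contradicting the bin hypothesis. I expect this additive-combinatorics core to be where the connection to Uniquely Decodable Code Pairs flagged in the abstract enters, and where pinning down the precise constant $3\epsilon^2/4$ becomes delicate.
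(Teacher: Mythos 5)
There is a genuine gap, and it sits exactly where you flag ``the main obstacle'': step (ii) cannot work as described, and the missing idea is not an additive-combinatorics estimate but the \emph{representation technique}. In your clean bipartition of $[n]$ into $L$ and $R$ with $|L|=(1/2+\alpha)n$, every solution $X$ is witnessed by exactly \emph{one} pair $(X\cap L,\,X\cap R)$, so you cannot discard any residue class of $L$-subsets without risking losing the unique witness; your modular bucketing is therefore only an indexing scheme. To build the residue-indexed table you must still touch all $2^{(1/2+\alpha)n}>2^{n/2}$ subsets of $L$ (and even generating only the needed classes on demand, say by an inner meet-in-the-middle on $L$, costs far more than $2^{n/2}$ over the $2^{(1/2-\alpha)n}$ queries), so the generation cost (ii) already exceeds the target running time no matter how $\gamma$ is chosen. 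The paper's Lemma~\ref{lem:binalgo} escapes this by making the two lists \emph{overlap} on a shared middle set $M$: both $\cL\subseteq 2^{L\cup M}$ and $\cR\subseteq 2^{R\cup M}$ contain a portion of $M$, so a single solution $X$ is represented by $2^{|M\cap X|}$ pairs $(S,T)$ with $S\cup T=X$. Because one of these exponentially many representatives lands in the chosen residue class mod $p$ with probability $\Omega^*(1)$ (this is where the hypothesis that $M$ generates many sums is needed, via a second-moment argument), one may keep only a $2^{-\pi|M|}$ fraction of each list, and \emph{that} is what makes both lists constructible in time below $2^{n/2}$.

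Two further points. First, the bin bound $\beta(w)\le 2^{(0.5-\epsilon)n}$ is not used to control how subset sums spread over residue classes; it is used to bound the number of pairs $(S,T)$ with $w(S)+w(T)=t$ that survive the modular filter (there are at most $\beta(w)2^{|M|}$ such pairs in total), giving the term $\beta(w)2^{(1.5-\gamma)\mu n}$ in Lemma~\ref{lem:binalgo}. The UDCP connection you invoke plays no role in Theorem~\ref{thm:smallbin}; it is used only for the large-bin results (Lemma~\ref{lem:sumsvsbin} and Theorem~\ref{thm:large bin easy}). Second, your proposal has no analogue of the paper's case distinction: Lemma~\ref{lem:binalgo} requires a set $M$ of size $\mu n=3\epsilon n/2$ generating at least $2^{\gamma|M|}$ sums with $\gamma=1-\epsilon/2$, and when no such part of the chosen partition exists, the paper switches to a different, elementary algorithm (all parts generate few sums, so $w(2^L)$ and $w(2^R)$ each have at most $2^{\gamma n/2}$ elements and plain dynamic programming plus lookup runs in $\Os(2^{(0.5-\epsilon/4)n})$). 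Your expectation that the exponent $1/2-\epsilon/4+3\epsilon^2/4$ ``naturally emerges'' from balancing $\alpha$ and $\gamma$ is therefore unsubstantiated; in the paper the constants come from the explicit choice $\gamma=1-\epsilon/2$, $\mu=3\epsilon/2$ inside this two-case analysis.
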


We have not optimized the precise constants in
Theorem~\ref{thm:smallbin} -- the main message is that any instance
with bin size up to $2^{(0.5-\epsilon)n}$ can be solved in time
$2^{(0.5-\Omega(\epsilon))n}$.
For $\epsilon \ge 1/6$, the running time of $2^{23n/48}$
obtained for $\epsilon = 1/6$ is still valid since $2^{(0.5-1/6)n}$
remains an upper bound on $\beta(w)$. In a
previous work~\cite{DBLP:conf/stacs/AustrinKKN15}, we solved
\subsetsum{} in time $\Os(2^{0.3399n}\beta(w)^4)$, which is faster than
Theorem~\ref{thm:smallbin} for small $\beta(w)$, but
Theorem~\ref{thm:smallbin} shows that we can beat the meet-in-the-middle bound 
for a much wider class of instances. 

From the other end, we also prove that when the maximum bin size
becomes too large, we can again solve \subsetsum{} truly faster:

\begin{theorem}
  \label{thm:large bin easy}
  There exist a constant $\delta > 0$ and a deterministic algorithm that solves all instances of
  \subsetsum{} with $\beta(w) \ge 2^{0.661n}$ in $\Os\big(2^{(0.5 - \delta) n}\big)$ time.
\end{theorem}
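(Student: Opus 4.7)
}
The plan is to show that a sufficiently large maximum bin forces the set of distinct subset sums, $S(w) := \{\sum_{i \in X} w_i : X \subseteq [n]\}$, to be appreciably smaller than $2^{n/2}$; once such a bound is in hand, Bellman's dynamic programming deterministically produces a witness subset summing to $t$ in time essentially $|S(w)|$.

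\emph{Step 1 (the extremal statement).} I would prove the following additive-combinatorial lemma: there is an absolute constant $\delta_0 > 0$ such that $\beta(w) \geq 2^{0.661 n}$ implies $|S(w)| \leq 2^{(1/2 - \delta_0)n}$. The trivial identity $\sum_x |f^{-1}(x)| = 2^n$ gives only the \emph{lower} bound $|S(w)| \geq 2^n / \beta(w) \geq 2^{0.339 n}$, so the matching upper bound must come from genuine structural constraints on the weights. The approach is to note that $\beta(w) \geq 2^{0.661 n}$ forces the collision count $\sum_x |f^{-1}(x)|^2 \geq \beta(w)^2 \geq 2^{1.322 n}$, and then to convert this into a container statement for $W = \{w_1, \dots, w_n\}$ via a Pl\"unnecke--Ruzsa-type inequality, adapted to the subset-sum setting along the lines of the uniquely decodable code pair formalism the authors mention in the abstract. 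Once $W$ is squeezed into a low-dimensional arithmetic container, $|S(w)|$ inherits the matching bound. The specific threshold $0.661$ is calibrated so that the resulting structural inequality just barely delivers $\delta_0 > 0$.

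\emph{Step 2 (the algorithm).} Given $|S(w)| \leq 2^{(1/2 - \delta_0)n}$, I would run Bellman's dynamic programming: maintain the sets $T_i = \{\sum_{j \in X} w_j : X \subseteq [i]\}$ together with one witness subset per element of $T_i$, and update via $T_{i+1} = T_i \cup (T_i + w_{i+1})$. Since $T_i \subseteq S(w)$ at every step, each update costs $\tilde O(|S(w)|)$, and the total running time is $\Os(|S(w)|) = \Os\big(2^{(1/2 - \delta_0)n}\big)$. The algorithm is fully deterministic; if $t \in T_n = S(w)$, the stored witness is returned, otherwise one reports ``no.''

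\emph{Main obstacle.} The entire difficulty lies in Step~1. A priori, large $\beta$ only \emph{lower}-bounds $|S(w)|$, so producing an \emph{upper} bound of the form $2^{(1/2 - \delta_0)n}$ requires exploiting the additive structure of $W$, which is where additive combinatorics enters and explains the authors' connection to extremal UDCP-style questions. The precise value $0.661$ is dictated by how strong a container theorem one can prove from the hypothesis: the constant must be large enough that the resulting structural bound genuinely beats $1/2$, with everything smaller than $0.661$ presumably falling into the regime covered by Theorem~\ref{thm:smallbin} (extended to all of $[0,1/2]$) so that the two theorems together leave no uncovered range of $\beta(w)$ above some intermediate value.
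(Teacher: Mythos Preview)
Your Step~1 claim is false: there exist weight vectors with $\beta(w)\ge 2^{0.661n}$ and $|S(w)|$ well above $2^{n/2}$. Take $2k$ weights $1,1,3,3,9,9,\ldots,3^{k-1},3^{k-1}$ together with $n-2k$ copies of some integer $N\gg 3^k$, with $k=0.33n$. The paired block has exactly $3^{k}$ distinct sums and maximum bin $2^{k}$; the constant block has $n-2k+1$ distinct sums and maximum bin $\binom{n-2k}{(n-2k)/2}\approx 2^{n-2k}$. Because $N$ is large the two blocks do not interact, so $\beta(w)\approx 2^{n-k}=2^{0.67n}$ while $|S(w)|\approx (n-2k)\cdot 3^{k}\approx 2^{0.523n}$. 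No Pl\"unnecke--Ruzsa or container argument can rescue Step~1, because the target inequality simply does not hold; on this instance your Bellman DP runs in time $\Omega(2^{0.52n})$.

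The paper's route is different: it bounds the number of sums generated by \emph{one half} of the coordinates, not by all of $[n]$. Lemma~\ref{lemma:large bin ub} shows that if an equipartition $S,T$ of $[n]$ has $|w(2^S)|,|w(2^T)|\ge 2^{(1/2-\delta)n}$ then $\beta(w)\le 2^{0.661n}$; contrapositively, $\beta(w)>2^{0.661n}$ guarantees that for any fixed equipartition some half $M$ satisfies $|w(2^M)|<2^{(1/2-\delta)n}$. This is too weak for a global DP, but it is exactly the hypothesis of the meet-in-the-middle routine in Lemma~\ref{lemma:exploiting few sums}: enumerate $w(2^M)$ by DP and match it against a complementary list of balanced size, for total time $\Os(2^{(1-\delta)n/2})$. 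In the counterexample above every equipartition indeed has a half generating few sums (any half containing most of the constant block), and that is what the paper exploits. As a side remark, your final sentence is also off: Theorems~\ref{thm:smallbin} and~\ref{thm:large bin easy} together do \emph{not} cover all values of $\beta(w)$; the range $2^{0.5n}\le\beta(w)\le 2^{0.661n}$ is left open.
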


\paragraph{Combinatorial Results} Given Theorem~\ref{thm:smallbin}, a natural question is how instances
with $\beta(w)\geq 2^{0.5n}$ look like. This question is an
instantiation of the inverse Littlewood-Offord problem, a subject
well-studied in the field of additive combinatorics. Ideally we would
like to find structural properties of instances with $\beta(w) \ge
2^{0.5n}$, that can be algorithmically exploited by other means than
Theorem~\ref{thm:smallbin} in order to resolve Open
Question~\ref{q:fast} in the affirmative.
While there is a
large amount of literature on the inverse Littlewood-Offord problem, the
typical range of $\beta(w)$ studied there is $\beta(w) =
2^n/\poly(n)$ which is not relevant for our purposes.
However, we did manage to
determine additional properties that any instance that is not solved
by Theorem~\ref{thm:smallbin} must satisfy.

In particular, we study a different natural parameter, the \emph{number of distinct sums} generated by $w$, defined as
$|w(2^{[n]})|=\{w(X) : X \subseteq [n] \}$ (where we denote
$w(X)=\sum_{i \in X}w_i$).  This parameter can be viewed as a measure
of the ``true'' density of an instance, in the following sense.  An
instance with density $d = n/\log_2 t$ has $|w(2^{[n]})| \le n
2^{n/d}$ (assuming without loss of generality that $t \le \max_i
w_i$).  On the other hand, by standard hashing arguments (e.g., Lemma~\ref{lemma:reduce t} with $B = 10|w(2^{[n]})|$), any instance
can be hashed down to an equivalent instance of density roughly
$n/\log_2 |w(2^{[n]})|$.

The relationship between $|w(2^{[n]})|$ and $\beta(w)$ is more
complicated.
Intuitively, one would expect that if one has so much
concentration that $\beta(w) \ge 2^{0.5n}$, then $w$ should not
generate too many sums.
We are not aware of any such results from
the additive combinatorics literature.  However, by establishing a new
connection to \emph{Uniquely Decodable Code Pairs}, a well-studied object
in information theory, we can derive the following bound.

\begin{lemma}\label{lem:sumsvsbin}
  If $|w(2^{[n]})| \ge 2^{0.997n}$ then $\beta(w) \le 2^{0.4996n}$.
\end{lemma}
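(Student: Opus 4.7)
The plan is to bundle $\beta(w)$ and $|w(2^{[n]})|$ into a single combinatorial object whose size can be bounded by an external result: a Uniquely Decodable Code Pair (UDCP) in $\{0,1\}^n$, i.e.\ a pair $(A, B)$ of subsets of $\{0,1\}^n$ for which the integer sum map $A \times B \to \mathbb{Z}^n$, $(a,b) \mapsto a + b$, is injective. The only place $w$ will enter is as a ``certificate'' witnessing injectivity.

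To construct the UDCP, let $x^\star \in \mathbb{Z}$ be a sum attaining the maximum bin size, set $\mathcal{F} := \{S \subseteq [n] : w(S) = x^\star\}$, so $|\mathcal{F}| = \beta(w)$, and for each distinct value $y \in w(2^{[n]})$ fix once and for all a witness subset $T_y \subseteq [n]$ with $w(T_y) = y$. Identifying each subset with its $\{0,1\}$-indicator vector in $\mathbb{Z}^n$, define
\[
A := \{\mathbf{1}_S : S \in \mathcal{F}\}, \qquad B := \{\mathbf{1}_{T_y} : y \in w(2^{[n]})\},
\]
so that $|A| = \beta(w)$ and $|B| = |w(2^{[n]})|$. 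The UDCP property is then immediate: if $\mathbf{1}_{S_1} + \mathbf{1}_{T_{y_1}} = \mathbf{1}_{S_2} + \mathbf{1}_{T_{y_2}}$ as vectors in $\mathbb{Z}^n$, then taking inner product with $w$ gives $x^\star + y_1 = x^\star + y_2$, so $y_1 = y_2$, hence $T_{y_1} = T_{y_2}$, hence $\mathbf{1}_{S_1} = \mathbf{1}_{S_2}$ and $S_1 = S_2$.

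With the UDCP in hand, I would invoke a quantitative upper bound of the form $|A|\cdot |B| \le 2^{\gamma n}$ for every UDCP $(A,B) \subseteq \{0,1\}^n \times \{0,1\}^n$ with some explicit $\gamma \le 1.4966$. Multiplying out gives $\beta(w)\cdot |w(2^{[n]})| \le 2^{1.4966\,n}$, and substituting the hypothesis $|w(2^{[n]})| \ge 2^{0.997\,n}$ yields $\beta(w) \le 2^{(1.4966-0.997)n} = 2^{0.4996\,n}$, as required.

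The main obstacle is the UDCP bound itself. The textbook entropy argument---take $a,b$ uniform and independent in $A,B$, use $\log_2|A|+\log_2|B| = H(a,b) = H(a+b) \le \sum_i H((a+b)_i)$, and note that each coordinate of $a+b$ takes values in $\{0,1,2\}$ with entropy at most $3/2$ (attained at marginal $(1/4,1/2,1/4)$, i.e.\ when $a_i,b_i$ are both $\mathrm{Bern}(1/2)$)---yields only $\gamma \le 3/2$, and feeding this through the computation above gives merely $\beta(w) \le 2^{0.503\,n}$, which is just slightly too weak. The real work is therefore a strict quantitative improvement over $3/2$ for UDCPs: one must rule out that the marginal entropies of $a+b$ can all sit arbitrarily close to the $(1/4,1/2,1/4)$ maximizer simultaneously in all $n$ coordinates while preserving the injectivity of the sum map. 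This is presumably the ``new connection to UDCPs'' alluded to in the introduction, and is where I expect the bulk of the technical effort to be concentrated (likely via a Fano-type correction or a sharpened subadditivity argument exploiting the combinatorial constraint that each value of $a+b$ has a unique preimage).
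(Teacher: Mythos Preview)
Your UDCP construction is correct and matches the paper's Proposition~4.2 essentially verbatim (with the roles of $A$ and $B$ swapped). The gap is in what you do next. You aim for a \emph{uniform product} bound $|A|\cdot|B|\le 2^{1.4966n}$ valid for \emph{all} UDCPs, correctly observe that the coordinate-wise entropy argument only gives $3/2$ in the exponent, and then conjecture that sharpening this to $1.4966$ is where the real work lies. That is not what the paper does, and a uniform product bound of that strength is not known (and is not needed).

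Instead, the paper invokes an \emph{asymmetric} bound due to Ordentlich and Shayevitz: if $(A,B)$ is a UDCP with $|A|\ge 2^{0.997n}$, then $|B|\le 2^{0.4996n}$. This is cited as a black box (their Theorem~1 with parameters $R_1=0.997$, $\alpha=0.07$). Applying it to your pair with $|B|=|w(2^{[n]})|\ge 2^{0.997n}$ immediately yields $|A|=\beta(w)\le 2^{0.4996n}$. So there is no ``bulk of technical effort'' internal to this paper; the lemma is a two-line combination of the UDCP construction with an external citation. Your proposed route via a global product bound would, if it worked, prove strictly more than what Ordentlich--Shayevitz gives in this regime, and you should not expect to recover the stated constants that way.
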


Unfortunately, we currently do not know how to algorithmically exploit $|w(2^{[n]})| \le 2^{0.997n}$. But we do know how to exploit a set $S$ with $|S|\leq n/2$ and $|w(2^S)|\leq 2^{0.4999n}$ (see Lemma~\ref{lemma:exploiting few sums}). This suggests the question of how large $\beta(w)$ can be in instances lacking such an $S$, and we prove the following bound.

\begin{lemma}\label{lemma:large bin ub}
  There is a universal constant $\delta > 0$ such that the following
  holds for all sufficiently large $n$.  Let $S, T$ be a partition of
  $[n]$ with $|S| = |T| = n/2$ such that $|w(2^S)|, |w(2^T)| \ge
  2^{(1/2-\delta)n}$.  Then $\beta(w) \le 2^{0.661n}$.
\end{lemma}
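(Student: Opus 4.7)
The plan is to prove the contrapositive: assume $\beta(w) > 2^{0.661 n}$ and deduce that $\min(|w(2^S)|, |w(2^T)|) < 2^{(1/2-\delta)n}$. Let $x$ witness $\beta(w) = \beta_x$, and for each $X$ in the bin write $X = A \sqcup B$ with $A \subseteq S$, $B \subseteq T$. Introducing the subset-sum counting functions $f(a) = |\{A \subseteq S : w(A) = a\}|$ and $g(b) = |\{B \subseteq T : w(B) = b\}|$, the bin size satisfies $\beta(w) = \sum_a f(a) g(x-a)$ with $\sum_a f(a) = \sum_b g(b) = 2^{n/2}$. Concretely, the bin bipartite graph between $2^S$ and $2^T$ is a disjoint union of complete bipartite pieces $K_{f(a), g(x-a)}$, one per value $a \in w(2^S) \cap (x - w(2^T))$.

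My approach combines two ingredients. First, apply Lemma~\ref{lem:sumsvsbin} to each sub-instance $w|_S$ and $w|_T$ (each of size $n/2$): noting that the sub-instance density $\log_2 |w(2^S)|/(n/2) \ge 0.997$ whenever $|w(2^S)| \ge 2^{(1/2-\delta)n}$ and $\delta$ is small enough, the lemma yields peak multiplicities $M_f = \max_a f(a), M_g = \max_b g(b) \le 2^{0.2498 n}$. A direct Cauchy--Schwarz here gives only $\beta \le \sqrt{M_f M_g} \cdot 2^{n/2} \le 2^{0.7498n}$, which is too weak. Second, I would reinterpret the bin as a Uniquely Decodable Code Pair $(\mathcal{A}, \mathcal{B})$ in $\{0,1\}^n$, where $\mathcal{A} = \{X \cap S : X \in \text{bin}\}$ and $\mathcal{B} = \{X \cap T : X \in \text{bin}\}$; unique decodability is immediate from disjointness of $S, T$. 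To extract non-trivial information, I would dyadically decompose the bin into scales $(F, G)$ indexed by an active set $\mathcal{V} = \{a : f(a) \in [F, 2F),\ g(x-a) \in [G, 2G)\}$ that contributes a $1/\poly(n)$ fraction of $\beta$, extract the iso-weight collections $\mathcal{A}^* = \bigcup_{a \in \mathcal{V}} \{A : w(A) = a\}$ and $\mathcal{B}^* = \bigcup_{a \in \mathcal{V}} \{B : w(B) = x - a\}$, and apply a UDCP upper bound augmented with the iso-weight constraint via symmetric differences within each bucket (which are zero-weight subsets of $S$ or $T$).

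The hard part is calibrating this refined UDCP argument to produce the specific constant $0.661$. Since the trivial disjoint-support UDCP allows $|\mathcal{A}||\mathcal{B}| = 2^n$, the strengthening via iso-weight symmetric differences is essential; closing the gap from $0.7498$ to $0.661$ requires matching the best known UDCP rate (from the information-theoretic UDCP literature feeding into Lemma~\ref{lem:sumsvsbin}) against the thresholds $F, G, |\mathcal{V}|$ and the parameter $\delta$. Once the UDCP bound yields $|\mathcal{A}^*||\mathcal{B}^*| \le 2^{c n}$ for a suitable $c$, optimizing $F, G$ subject to $F \le M_f$ and $G \le M_g$ from the first ingredient should close the argument and give $\beta(w) \le 2^{0.661 n}$.
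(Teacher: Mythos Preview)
Your opening Cauchy--Schwarz step is exactly the right one: writing $\beta(w)=\max_x\sum_a f(a)g(x-a)\le\|f\|_2\|g\|_2$ (your $f,g$ are the paper's $b_S,b_T$). But from there the proposal goes astray. Bounding $\|f\|_2$ via $\|f\|_2^2\le\|f\|_\infty\|f\|_1$ and then invoking Lemma~\ref{lem:sumsvsbin} on each half is inherently too lossy: it routes through the binary UDCP bound of Theorem~\ref{thm:udcpbound}, and you correctly compute that this only yields $2^{0.7498n}$. Your attempted repair --- a UDCP $(\mathcal A,\mathcal B)$ on disjoint supports plus a dyadic/iso-weight refinement --- does not recover the loss. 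As you observe, the disjoint-support UDCP is trivial ($|\mathcal A||\mathcal B|\le 2^n$ with equality possible), and the ``iso-weight symmetric differences'' idea does not produce a concrete bound; in particular, the constant $0.661$ does \emph{not} come from any binary UDCP rate in the literature, so trying to ``match'' such a rate cannot succeed.

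The missing idea is to bound $\|b_S\|_2$ \emph{directly}, without passing through $\|b_S\|_\infty$. The key identity is
\[
\|b_S\|_2^2 \;=\; \sum_{U,V\subseteq S}[w(U)=w(V)] \;=\; \sum_{y\in\{-1,0,1\}^{|S|}}[y\cdot w=0]\,2^{|y^{-1}(0)|}\,,
\]
which reduces the problem to bounding, for each $\sigma$, the set $B_\sigma=\{y\in\{-1,0,1\}^{|S|}:y\cdot w=0,\ \|y\|_1=\sigma|S|\}$. Now take a maximal injective $A\subseteq\{0,1\}^{|S|}$ (so $|A|=|w(2^S)|\ge 2^{(1-2\delta)|S|}$) and observe that $|A+B_\sigma|=|A|\cdot|B_\sigma|$ --- a UDCP-type statement, but with one code \emph{ternary}. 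Counting the image in $\{-1,0,1,2\}^{|S|}$ via the frequency profile of balanced pairs gives $|B_\sigma|\le 2^{(h(\sigma/2)+o_\delta(1))|S|}$, whence $\|b_S\|_2^2\le 2^{(1+\max_\sigma(h(\sigma/2)-\sigma)+o_\delta(1))|S|}$. The inner maximum is at $\sigma=2/5$, giving $h(1/5)+3/5\approx 1.322$, i.e.\ $\|b_S\|_2\le 2^{0.661|S|}$. This is where the constant $0.661$ actually originates; it is an entropy optimum over a quaternary alphabet, not a binary UDCP rate.
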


\paragraph{Further Consequences}
Combining Lemma~\ref{lem:sumsvsbin} and Theorem~\ref{thm:smallbin}, we
see directly that instances that generate almost $2^n$ distinct sums can be
solved faster than $2^{0.5n}$.

\begin{theorem}\label{cor:manysums}
  There exists a Monte Carlo algorithm that solves all instances of \subsetsum{} with $|w(2^{[n]})| \geq 2^{0.997n}$ in time $O^*(2^{0.49991n})$.
\end{theorem}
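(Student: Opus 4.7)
The plan is to combine Lemma~\ref{lem:sumsvsbin} with Theorem~\ref{thm:smallbin} directly, essentially as a corollary. Given an instance with $|w(2^{[n]})|\geq 2^{0.997n}$, I would first invoke Lemma~\ref{lem:sumsvsbin} to deduce that $\beta(w)\leq 2^{0.4996n}$. Writing $0.4996 = 0.5 - \epsilon$ with $\epsilon = 0.0004$ places us in the regime $\epsilon \in [0,1/6]$ required by Theorem~\ref{thm:smallbin}, so a single application of that theorem yields a Monte Carlo algorithm with running time $O^*(2^{(0.5-\epsilon/4+3\epsilon^2/4)n})$.

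To conclude, I would verify the claimed numeric bound. Substituting $\epsilon = 0.0004$ into the exponent gives
\[
0.5 \;-\; \frac{\epsilon}{4} \;+\; \frac{3\epsilon^2}{4} \;=\; 0.5 \;-\; 10^{-4} \;+\; 1.2\cdot 10^{-7} \;<\; 0.49991,
\]
so the algorithm indeed runs in $O^*(2^{0.49991n})$ time as claimed. Note that $0.49991 < 0.5$, so this is strictly below the meet-in-the-middle threshold.

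There is no real obstacle here beyond bookkeeping: the technical work is all contained in Lemma~\ref{lem:sumsvsbin} (which uses the connection to Uniquely Decodable Code Pairs to convert a lower bound on the number of distinct sums into an upper bound on the maximum bin size) and in Theorem~\ref{thm:smallbin} (the main algorithmic contribution for small-bin instances). The value of $\epsilon$ is forced by the bound coming from Lemma~\ref{lem:sumsvsbin}, and the specific constant $0.49991$ stated in the theorem is comfortably above the actual exponent $\approx 0.49990012$, leaving slack to absorb the quadratic correction $3\epsilon^2/4$.
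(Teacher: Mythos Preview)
Your proposal is correct and is precisely the argument the paper intends: it states Theorem~\ref{cor:manysums} as an immediate combination of Lemma~\ref{lem:sumsvsbin} and Theorem~\ref{thm:smallbin}, and your numerical check that $0.5-\epsilon/4+3\epsilon^2/4<0.49991$ for $\epsilon=0.0004$ is exactly the missing arithmetic.
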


Combining this with the view described above of $|w(2^{[n]})|$ as a
refined version of the density of an instance, we have the following
result, to support the title of our paper:

\begin{theorem}\label{thm:densityreduction}
  Suppose there exist a constant $\epsilon >0$ and an algorithm that solves all \subsetsum{} instances of density at least $1.003$ in time $\Os(2^{(0.5-\epsilon)n})$. Then there exists a Monte Carlo algorithm that solves \subsetsum{} in time $\Os\big(2^{\max\{0.49991, 0.5-\epsilon\}n}\big)$.
\end{theorem}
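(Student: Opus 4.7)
The plan is to combine the two algorithms at hand via a case split on $|w(2^{[n]})|$, dichotomizing at the threshold $2^{0.997n}$ that already appears in Theorem~\ref{cor:manysums}.

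If $|w(2^{[n]})| \geq 2^{0.997n}$, then Theorem~\ref{cor:manysums} already furnishes a Monte Carlo algorithm running in time $\Os(2^{0.49991n})$, and we are done.

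If on the other hand $|w(2^{[n]})| < 2^{0.997n}$, then I would invoke the standard hashing reduction (Lemma~\ref{lemma:reduce t}, alluded to in the paragraph preceding Lemma~\ref{lem:sumsvsbin}) with hashing modulus $B = \Theta(|w(2^{[n]})|)$. This produces an equivalent Subset Sum instance on the same $n$ items whose new target $t'$ satisfies $\log_2 t' \leq 0.997n + O(\log n)$. Consequently the density of the hashed instance is
\[
\frac{n}{\log_2 t'} \;\geq\; \frac{n}{0.997n + O(\log n)} \;\geq\; 1.003
\]
for all sufficiently large $n$, since $1/0.997 \approx 1.00301 > 1.003$ strictly. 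The assumed dense-instance algorithm then solves the hashed instance in the promised $\Os(2^{(0.5-\epsilon)n})$ time.

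Since we do not know $|w(2^{[n]})|$ in advance, I would simply run \emph{both} branches on every input, trying in the second branch the polynomially many powers of two $B = 2,4,\ldots,2^n$ as hashing moduli, and return ``yes'' if either branch ever reports a solution. This adds only polynomial overhead and preserves one-sided-error Monte Carlo correctness: no branch produces false positives, and on a yes-instance either $|w(2^{[n]})| \geq 2^{0.997n}$ and the first branch succeeds with inverse-polynomial probability, or else the correct guess $B = \Theta(|w(2^{[n]})|)$ in the second branch hashes to a density-$\geq 1.003$ instance on which the assumed algorithm succeeds. The total running time is $\Os(2^{\max\{0.49991,\, 0.5-\epsilon\}n})$ as claimed. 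The only real obstacle is the tight arithmetic check that the threshold $2^{0.997n}$ from Lemma~\ref{lem:sumsvsbin} really does translate to density just above the hypothesized $1.003$; this is why the theorem specifies $1.003$ rather than, say, $1$, and is ultimately the reason why improving the constant $0.997$ in Lemma~\ref{lem:sumsvsbin} would improve the density threshold here.
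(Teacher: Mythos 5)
Your proposal follows essentially the same route as the paper: dichotomize on whether $|w(2^{[n]})|$ exceeds $2^{0.997n}$, handle the dense-sums case with Theorem~\ref{cor:manysums}, and in the other case hash down via Lemma~\ref{lemma:reduce t} to an instance of density $1/0.997>1.003$ for the assumed oracle. Two small technical points need tightening, both of which the paper handles explicitly. First, since the branch based on Theorem~\ref{cor:manysums} is only guaranteed to run in $\Os(2^{0.49991n})$ time when its hypothesis $|w(2^{[n]})|\ge 2^{0.997n}$ holds, you must cap its execution at $\Theta^*(2^{0.49991n})$ steps and abort otherwise; ``run both branches'' with no cap does not give the claimed running time on instances with few sums and large bins. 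Second, your assertion that ``no branch produces false positives'' is false as stated: when $B$ is too small relative to $|w(2^{[n]})|$ (which happens for most of the powers of two you try), the hashed instance can be a genuine yes-instance even though the original is a no-instance, so you must take the solution $X$ returned by the oracle and verify $w(X)=t$ against the original weights before answering YES. (Incidentally, trying all powers of two for $B$ is unnecessary: the fixed choice $B=10\cdot 2^{0.997n}$ satisfies $B\ge 10|w(2^{[n]})|$ in the relevant case and already yields density above $1.003$.) With these two fixes your argument is correct and coincides with the paper's.
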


After the result by Howgrave-Graham and Joux~\cite{HowgraveGrahamJoux10}, this may be a next step towards resolving Open Question~\ref{q:fast}.  Intuitively, one should be able to exploit the fact that the integers in a dense instance have fewer than $n$ bits. For example, even if only the target is picked uniformly at random, in expectation there will be an exponential number of solutions, which can easily be exploited.\footnote{\label{footnote:manysoln}For example, assuming there are at least $2^{\sigma n}$ solutions for a constant $\sigma\geq 0$, use a dynamic programming table data structure to randomly sample the subsets in the congruence class $t\bmod q$ for $q$ a random prime with about $(1-\sigma)n/2$ bits within linear time per sample. A solution is found within $\Os(2^{(1-\sigma)n/2})$ samples with high probability.}.

Finally, let us note a somewhat curious consequence of our results.
As mentioned earlier, in the context of Open
Question~\ref{q:polyspace}, it is known that the $\Os(2^{n/d})$
running time for instances of density $d$ achieved through dynamic
programming can be achieved in polynomial space~\cite{LokshtanovN10} (see also 
\cite[Theorem 1(a)]{DBLP:conf/iwpec/KaskiKN12}).
Combining this with Corollary~\ref{cor:manysums} and hashing, we directly get the following
``interleaving'' of Open Questions~\ref{q:fast} and \ref{q:polyspace}.

\begin{corollary}\label{cor:strange}
There exist two Monte Carlo algorithms, 
one running in $\Os(2^{0.49991n})$ time and 
the other in $\Os(2^{0.999n})$ time and polynomial space, 
such that every instance of \subsetsum{} is solved by at least one of the algorithms. 
\end{corollary}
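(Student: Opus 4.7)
The plan is to take one algorithm to be the Monte Carlo algorithm of Corollary~\ref{cor:manysums}, and the second to be a modular hashing reduction followed by the polynomial-space $\Os(t)$ algorithm of Lokshtanov and Nederlof. The two will cover complementary regimes of the parameter $|w(2^{[n]})|$, separated at the threshold $2^{0.997n}$ that appears in Corollary~\ref{cor:manysums}.

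Algorithm~1 is exactly the Monte Carlo algorithm of Corollary~\ref{cor:manysums}: it runs in $\Os(2^{0.49991n})$ time and, whenever $|w(2^{[n]})| \ge 2^{0.997n}$, finds a solution (when one exists) with inverse-polynomial probability, never producing false positives.

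Algorithm~2 is the following polynomial-space procedure. Sample a uniformly random prime $q$ in a window $[B, 2B]$ with $B = 2^{0.998n}$; form the hashed instance with weights $w_i \bmod q$ and target $t \bmod q$; invoke the polynomial-space $\Os(t)$ algorithm of Lokshtanov--Nederlof on this instance to find a subset $X$ with $\sum_{i \in X} w_i \equiv t \pmod q$; and output $X$ only after verifying $\sum_{i \in X} w_i = t$ in the original instance. The running time is $\Os(q) = \Os(2^{0.999n})$ and the space is polynomial. On an input with $|w(2^{[n]})| < 2^{0.997n}$, the hashing argument (cf.~Lemma~\ref{lemma:reduce t} with parameter $10|w(2^{[n]})| \le B$) furnishes the equivalence needed: a union bound over the at most $|w(2^{[n]})|$ distinct subset sums $s \ne t$ shows that with high probability over $q$, no such $s$ is congruent to $t$ modulo $q$, so any subset returned by the polynomial-space routine is a genuine solution, and the modular search does succeed whenever a real solution exists.

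Correctness of the pair follows by a dichotomy on $|w(2^{[n]})|$: if $|w(2^{[n]})| \ge 2^{0.997n}$ Algorithm~1 succeeds, and otherwise Algorithm~2 does. The main technical care concerns the no-false-positive guarantee of Algorithm~2, which is discharged by the explicit verification against the original target $t$: any spurious modular coincidence (a subset whose sum agrees with $t$ only modulo $q$) is silently rejected rather than reported as a solution, while the hashing lemma simultaneously guarantees that a genuine solution is returned with at least constant probability, so that the Monte Carlo contract is preserved.
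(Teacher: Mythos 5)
Your proposal is correct and takes essentially the same route as the paper, which obtains the corollary by combining the algorithm of Theorem~\ref{cor:manysums} with the hashing of Lemma~\ref{lemma:reduce t} followed by the polynomial-space $\Os(t)$ algorithm of Lokshtanov and Nederlof, split at the threshold $|w(2^{[n]})| = 2^{0.997n}$. The only detail to tidy is that the Lokshtanov--Nederlof routine solves exact rather than modular \subsetsum{}, so after hashing mod $q$ you should run it on the $n$ candidate exact targets $(t \bmod q) + rq$ for $r = 0,\ldots,n-1$ (or pick $r$ at random, as in Lemma~\ref{lemma:reduce t}), which costs only a polynomial factor.
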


\paragraph{Organization of the paper}
This paper is organized as follows: In Section~\ref{sec:prel} we
review some preliminaries. In
Section~\ref{sec:algorithms}, we provide the proofs of our main algorithmic results. 
In Section~\ref{sec:combinatorics} we prove two combinatorial lemmas.
In Section~\ref{sec:density} we give the proof for Theorem~\ref{thm:densityreduction}.
Finally we end with some discussion on in
Section~\ref{sec:discussion}.

\section{Preliminaries}\label{sec:prel}

For a modulus $m\in\Z_{\geq 1}$ and $x,y\in\Z$, we write $x\equiv y\pmod m$, or $x\equiv_m y$ for short, to indicate that $m$ divides $x-y$. Throughout this paper, $w_1, w_2,\ldots,w_n, t$ will denote the input integers of a \subsetsum{} instance. We associate the set function $w:2^{[n]}\rightarrow \mathbb{Z}$ with these integers by letting $w(X)=\sum_{i\in X}w_i$, and for a set family $\cF \subseteq 2^{[n]}$ we write $w(\cF)$ for the image $\{w(X) : X\in \cF\}$. 



For $0 \leq x_1,x_2,\ldots, x_\ell \leq 1$ with $\sum_{i=1}^{\ell}x_i =1$ we write $h(x_1,x_2,\ldots,x_\ell)=\sum_{i=1}^\ell -x_i\log_2 x_i$ for the entropy function. Here, $0\log_2 0$ should be interpreted as $0$. We shorthand $h(x,1-x)$ with $h(x)$. 
We routinely use the standard fact (easily proved using Stirling's formula) that for non-negative integers $n_1, \ldots, n_\ell$ (where $\ell$ is a constant) summing to $n$, it holds that $\binom{n}{n_1, \ldots, n_\ell} = 2^{h(n_1/n, \ldots, n_\ell/n) n} \cdot \poly(n)$.






\begin{claim}
\label{prop:random mod good}
For every sufficiently large integer $r$ the following holds. 
If $p$ is a prime between $r$ and $2r$ selected uniformly at random 
and $x$ is a nonzero integer, then $p$ divides $x$ with probability at most $(\log_2 x)/r$.  
\end{claim}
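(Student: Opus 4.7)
The plan is to bound the probability directly by separately bounding the numerator (primes $p \in [r,2r]$ that divide $x$) and the denominator (all primes in $[r,2r]$), then taking the ratio.

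For the numerator, I would observe that any $k$ distinct primes $p_1 < p_2 < \cdots < p_k$ each dividing the nonzero integer $|x|$ must satisfy $p_1 p_2 \cdots p_k \le |x|$. Since each $p_i \ge r$, this gives $r^k \le |x|$, and hence $k \le \log_r |x| = (\log_2 |x|)/(\log_2 r)$. In particular, at most $(\log_2 |x|)/(\log_2 r)$ of the primes in $[r,2r]$ divide $x$.

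For the denominator, I would invoke the Prime Number Theorem in the form $\pi(2r) - \pi(r-1) = (1+o(1))\, r/\ln r = (1+o(1))\, r/(\ln 2 \cdot \log_2 r)$ as $r \to \infty$. Since $1/\ln 2 \approx 1.443 > 1$, for all sufficiently large $r$ this count is at least $r/\log_2 r$. Dividing the numerator bound by the denominator bound then yields a probability of at most
\[
\frac{(\log_2 |x|)/\log_2 r}{r/\log_2 r} \;=\; \frac{\log_2 |x|}{r},
\]
which is the claimed estimate.

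The step to be careful with is invoking PNT in the right quantitative form: we need $\pi(2r)-\pi(r) \ge r/\log_2 r$ for all sufficiently large $r$, and it is precisely the factor $1/\ln 2 > 1$ supplied by PNT that furnishes the slack to beat the target bound. Everything else is routine: the divisor count is elementary, and the passage to $|x|$ is harmless since divisibility by $p$ depends only on $|x|$. Note that Bertrand's postulate alone, which guarantees only a single prime in $[r,2r]$, would not be enough for this proof; one truly needs the density estimate provided by PNT (or any Chebyshev-type lower bound with the right order of magnitude).
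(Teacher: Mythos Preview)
Your proposal is correct and follows essentially the same argument as the paper: bound the number of prime divisors of $x$ in $[r,2r]$ by $\log_r |x|$, lower bound the number of primes in $[r,2r]$ by $r/\log_2 r$ via the Prime Number Theorem, and take the ratio. Your write-up is slightly more explicit about the role of the constant $1/\ln 2 > 1$ in making the PNT estimate beat $r/\log_2 r$, but the approach is identical.
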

\begin{proof}
  By the Prime Number Theorem~\cite[p.~494, Eq.~(22.19.3)]{HardyWright08}, there are at least $r/\log_2 r$ primes
  between $r$ and $2r$.  A nonzero number $x$ can have at most $\log_r x$
  prime factors larger than $r$.  The probability that a random prime
  between $r$ and $2r$ is a factor of $x$ is therefore at most
  $(\log_r x)\big/(r/\log_2 r) = (\log_2 x)/r$.
\end{proof}

\begin{lemma}[Bit-length reduction]
  \label{lemma:reduce t}

There exists a randomized algorithm 
that takes as input a \subsetsum{} instance $w_1,w_2,\ldots,w_n,t\in\Z$ and an integer $B \in \Z$, 
and in time $O^*(1)$ outputs a new \subsetsum{} instance $w_1',w_2',\ldots,w_n', t'\in\mathbb{Z}$ such that with probability $\Omega^*(1)$, the following properties all simultaneously hold.
\begin{enumerate}
\item \label{hashlemma:modulus}
$0\leq w_1',w_2',\ldots,w_{n}',t'< 4n B \log_2 B$.
\item \label{hashlemma:correctness}
  If $B \ge 10 \cdot |w(2^{[n]})|$, then $X \subseteq [n]$ satisfies $w(X)=t$ 
  if and only if $w'(X)=t'$.
\item \label{hashlemma:sums}
  If $B \ge 10 \cdot |w(2^{[n]})|$, then $|w(2^{[n]})|/2 \leq|w'(2^{[n]})|\leq n |w(2^{[n]})|$.
\item \label{hashlemma:binsize}
  If $B \ge 5 \cdot |w(2^{[n]})|^2$, then $\beta(w)/n \leq \beta(w') \le \beta(w)$.
\end{enumerate}
\end{lemma}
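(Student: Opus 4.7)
The plan is to select a uniformly random prime $p$ from the range $[r, 2r]$ with $r$ chosen roughly as $2nB\log_2 B$, and output the instance defined by $w_i' := w_i \bmod p$ and $t' := t \bmod p$. Item~\ref{hashlemma:modulus} is then immediate since $0 \le w_i', t' < p \le 2r = 4nB\log_2 B$. The remaining items all reduce to a single meta-statement: with probability $\Omega^*(1)$, the prime $p$ does not divide any of a small explicit list of nonzero integers associated to the instance. Each such argument uses Claim~\ref{prop:random mod good}, which bounds the probability of $p \mid x$ by $(\log_2 |x|)/r$, together with a union bound.

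For Item~\ref{hashlemma:correctness} the forbidden event is that some $X$ with $w(X) \neq t$ nevertheless satisfies $w(X) \equiv t \pmod{p}$. The number of distinct nonzero differences $w(X)-t$ is at most $|w(2^{[n]})| \leq B/10$, and union-bounding Claim~\ref{prop:random mod good} against $r = 2nB\log_2 B$ keeps the failure probability inverse-polynomial. Conditioned on this event, the modular encoding captures the original \subsetsum{} instance, in the sense that $w(X) = t$ is equivalent to the identity $w'(X) \equiv t' \pmod{p}$ for every $X \subseteq [n]$.

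For Items~\ref{hashlemma:sums} and~\ref{hashlemma:binsize} I would track the map $X \mapsto w'(X)$ carefully. Writing $w'(X) = w(X) - p\, k_X$ for some $k_X \in \{0, 1, \ldots, n\}$, each value of $w(X)$ spawns at most $n+1$ distinct values of $w'(X)$; this immediately yields the upper bound $|w'(2^{[n]})| \leq n|w(2^{[n]})|$ and, by pigeonhole on a maximum bin of $w$, the lower bound $\beta(w') \gtrsim \beta(w)/n$. The reverse inequalities in both items require that \emph{distinct} values of $w(X)$ stay distinct under $w'$, which fails only when $p$ divides some nonzero difference $w(X_1) - w(X_2)$. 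Union-bounding over the $|w(2^{[n]})|$ sums (giving $|w'(2^{[n]})|\ge |w(2^{[n]})|/2$) versus over the $|w(2^{[n]})|^2$ ordered pairs of distinct sums (giving $\beta(w') \le \beta(w)$) explains precisely the two different hypotheses $B \geq 10\,|w(2^{[n]})|$ and $B \geq 5\,|w(2^{[n]})|^2$: the quadratic size of the pair union bound forces the quadratic lower bound on $B$ in Item~\ref{hashlemma:binsize}.

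The main obstacle is arranging that all four items hold \emph{simultaneously} with probability $\Omega^*(1)$: each separate union bound eats into the success probability, and the choice of $r$ must be calibrated so that every one of them leaves a $1/\poly(n)$ slack, while staying small enough to satisfy Item~\ref{hashlemma:modulus}. A secondary bookkeeping task is to package the modular output into an honest \subsetsum{} instance in which the equality $w'(X)=t'$ is meant literally over $\mathbb{Z}$ rather than modulo $p$; this is routine once the chosen $p$ forces the sums $w'(X)$ we care about to lie in a single residue class in $[0, np]$, but must be done without breaking the bin-size and distinct-sum counts established in Items~\ref{hashlemma:sums}--\ref{hashlemma:binsize}.
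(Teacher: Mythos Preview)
Your overall plan is right and matches the paper's, but the step you label ``routine bookkeeping'' is precisely the missing idea. With $w_i' = w_i \bmod p$ and $t' = t \bmod p$, a genuine solution $X$ satisfies $w'(X) \equiv t' \pmod p$, but $w'(X) = \sum_i (w_i \bmod p)$ is an integer in $[0,np)$ that will almost never equal $t \bmod p$ on the nose; different solutions $X$ can even land on different multiples of $p$. You cannot fix this after the fact without knowing a solution. The paper's remedy is to set $t' = (t \bmod p) + r\cdot p$ for a uniformly random $r \in \{0,\ldots,n-1\}$: then for any fixed solution $X$, the event $w'(X)=t'$ over $\Z$ occurs with probability exactly $1/n$, which is the source of the $\Omega^*(1)$ rather than constant success probability. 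Without this random shift (or an equivalent device), Item~\ref{hashlemma:correctness} simply fails in the forward direction.

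A second, smaller gap is your choice of range. Taking $r \approx 2nB\log_2 B$ makes Item~\ref{hashlemma:modulus} immediate, but the nonzero differences you feed into Claim~\ref{prop:random mod good} have magnitude up to $n t$, so the bound you get is $(\log_2 t)/r$, which need not be small when $t$ is huge compared to $B$. The paper instead picks $p \in [B\log_2 t,\,2B\log_2 t]$ so that the union bounds go through regardless of $t$, and then observes that after one or two iterations of the reduction the new target is already $O(nB\log_2 B)$, recovering Item~\ref{hashlemma:modulus}. Finally, your one-line ``union bound over $|w(2^{[n]})|$ sums'' for the lower bound in Item~\ref{hashlemma:sums} hides a Markov step: the per-sum collision probability is at most $1/5$, and Markov then gives that with probability $\ge 3/5$ at most half the sums collide; a plain union bound over all sums would need the stronger hypothesis of Item~\ref{hashlemma:binsize}.
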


\begin{proof}
  The algorithm picks a uniformly random prime $p$ from the interval
  $B \log_2 t \le p \le 2B \log_2 t$.  It then outputs $w_i' = w_i \bmod
  p$ for each $i = 1,2,\ldots,n$ and $t' = (t \bmod p) + r \cdot
  p$ for a uniformly randomly chosen $r \in \{0, \ldots, n-1\}$.  The
  number of bits in $p$ is $\Os(1)$, so the
  reduction runs in $\Os(1)$ time.

  This construction might not satisfy Property~\ref{hashlemma:modulus}
  if $t$ is very large -- instead it satisfies the weaker bound $t' <
  2n B \log_2 t$.  The desired bound can be obtained by repeating the
  reduction $O(1)$ times -- e.g., if after three steps $t''$ is more
  than $2n B \log_2 B$, then the original number $t$ was triply
  exponential and the input weights are so large that brute force time
  $\Os(2^n)$ is $\Os(1)$.
  
Consider any two sums $y_1,y_2\in w(2^{[n]})\cup\{t\}$ with $y_1\neq
y_2$.  We have $y_1\equiv_p y_2$ if and only if $p$ divides
$x=y_2-y_1\neq 0$. Since $0<|y_2-y_1|=|x|\leq 2 n t \le t^2$ (without
loss of generality we assume $t \ge 2n$ since otherwise dynamic
programming lets us solve the instance in polynomial time), by
Claim~\ref{prop:random mod good} we have $y_1\equiv_p y_2$ with
probability at most $(\log_2 |x|)/(B\log_2 t) \leq 2/B$.

To establish Property~\ref{hashlemma:correctness}, we take a union
bound over the $|w(2^{[n]})| \le B/10$ different sums
$y \in w(2^{[n]}) \setminus \{t\}$ and deduce that with
probability at least $4/5$ over the choice of $p$, $w(X) \ne t$ implies
$w'(X) \ne t'$.  In the other direction, if $w(X) = t$, then $w'(X) =
t'$ with probability exactly $1/n$ over the choice of $r$.

For Property~\ref{hashlemma:sums} the argument is similar: for any
given sum $y \in w(2^{[n]})$, the probability that there exists
another sum $y' \ne y$ that collides with $y$ mod $p$ is at most
$1/5$.  By Markov's inequality, this implies that with probability at
most $2/5$, more than half of all $y \in w(2^{[n]})$ collide with some
other sum $y'$ mod $p$.  Conversely with probability at least $3/5$, at least half of all $y \in w(2^{[n]})$ have no such collision, and in this case $|w'(2^{[n]})| \ge |w(2^{[n]})|/2$.  For the
upper bound, two sets $S_1$, $S_2$ with $w(S_1) = w(S_2)$ have
$w'(S_1) \equiv_p w'(S_2)$.  Given $w'(S_1) \bmod p$, there are at
most $n$ possible different values for $w'(S_1)$, so each bin of
$w(2^{[s]})$ is split into at most $n$ bins after the hashing.

Finally, for Property~\ref{hashlemma:binsize}, we use a union bound
over all pairs of sums in $\binom{w(2^{[n]})}{2}$.  Since $B \ge 5
|w(2^{[n]})|^2 \ge 10\binom{|w(2^{[n]})|}{2}$, with
probability at least $4/5$, no two sums $y_1, y_2 \in w(2^{[n]})$ are
hashed to the same value mod $p$ and thus $\beta(w') \le \beta(w)$.
For the lower bound, we have as before that each original bin is split
into at most $n$ bins after hashing, and the largest of these must be
at least a $1/n$ fraction of the original bin size.

Now we combine the above bounds to bound the probability all properties hold simultaneously: 
taking a union bound over the events depending on $p$, we have that
with probability at least $1/5$, $w(X) \ne t$ implies $w'(X) \ne
t'$, and Properties~\ref{hashlemma:sums} and \ref{hashlemma:binsize}
are satisfied.  Conditioned on this, the final good event, that $w(X)
= t$ implies $w'(X) = t$ still happens with probability $1/n$ since
$r$ is independent of $p$.
\end{proof}

\section{Algorithmic Results}
\label{sec:algorithms}

This section establishes Theorems~\ref{thm:smallbin} and~\ref{thm:large bin easy}. 
We begin with two lemmas showing how one can exploit a subset of the input integers if it generates either many or few distinct sums. The case of many sums is the main technical challenge and addressed by the following result, which is our main algorithmic contribution.  


\begin{lemma}\label{lem:binalgo}
  There is a randomized algorithm that, given positive integers
  $w_1,\ldots,w_n,t\leq 2^{O(n)}$ and a set $M\in \binom{[n]}{\mu n}$
  satisfying $\mu \leq 0.5$ and $|w(2^{M})|\geq 2^{\gamma|M|}$ for
  some $\gamma \in [0,1]$, finds a subset $X \subseteq [n]$ satisfying
  $w(X)=t$ with probability $\Omega^*\!(1)$ (if such an $X$ exists) in
  time $\Os\big(2^{\left(0.5+0.8113\mu-\gamma\mu\right)n} +
  \beta(w)2^{(1.5-\gamma)\mu n}\big)$.
\end{lemma}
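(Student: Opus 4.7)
\emph{Our plan.} We combine a Howgrave--Graham--Joux-style representation sieve with a nested meet-in-the-middle, tuned to the partition $[n] = M \cup N$ with $N := [n]\setminus M$, so as to exploit the many-sums hypothesis $|w(2^M)| \ge 2^{\gamma\mu n}$.

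By iterating over the $O(n^2)$ possible values of $(|X\cap M|, |X\cap N|)$ with polynomial overhead and a standard symmetrization, we may assume the target solution $X$ has $|X\cap M| = \mu n/2$ and $|X\cap N| = (1-\mu)n/2$. We then represent such $X$ as a disjoint union $X = X_1 \sqcup X_2$ with $|X_i \cap M| = \mu n/4$ and $|X_i \cap N| = (1-\mu)n/4$. There are
\[
R = \binom{\mu n/2}{\mu n/4}\binom{(1-\mu)n/2}{(1-\mu)n/4} = 2^{n/2}/\poly(n)
\]
such representations for each fixed $X$, and pre-filter there are $K = \binom{\mu n}{\mu n/4}\binom{(1-\mu)n}{(1-\mu)n/4} = 2^{h(1/4)n}/\poly(n) = 2^{0.8113n}/\poly(n)$ candidate $X_i$'s. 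We pick a random prime $p$ of size roughly $2^{n/2}$ and a random residue $r$, and restrict to disjoint pairs $(X_1, X_2)$ with $w(X_1) \equiv r \pmod{p}$ and $w(X_2) \equiv t-r \pmod{p}$, checking $w(X_1) + w(X_2) = t$ exactly. Since $R/p = \Theta(1)$, each solution $X$ has a surviving representation with constant probability, so $\poly(n)$ trials of $r$ and $p$ suffice.

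Each sieved list $L_i = \{X_i : w(X_i) \equiv r_i \pmod p\}$ has expected size $K/p = 2^{0.3113n}/\poly(n)$, and is enumerated by a nested meet-in-the-middle on the $M$- and $N$-parts of $X_i$. The many-sums hypothesis enters on the $M$-side: by pigeonhole over subset sizes, $|w(2^M)| \ge 2^{\gamma\mu n}$ implies that some level $k$ has $\ge 2^{\gamma\mu n}/\poly(n)$ distinct sums among size-$k$ subsets of $M$, and by aligning the split parameter with this $k$ we obtain that the hash $w(Y_1) \bmod p$ takes at least $\Omega^*(2^{\gamma\mu n})$ distinct values. Consequently each residue-class bucket on the $M$-side contains at most $O^*(2^{(0.8113-\gamma)\mu n})$ candidate $Y_1$'s, which after balancing the modulus yields the first running-time term $\Os(2^{(0.5 + 0.8113\mu - \gamma\mu)n})$. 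The second term $\beta(w)\cdot 2^{(1.5-\gamma)\mu n}$ accounts for the fallback in which $M$-sum collisions must be explicitly resolved: every sum in $w(2^M)$ has at most $\beta(w)$ preimages, and the number of (residue-class, distinct-sum) pairs we must scan on the $M$-side is at most $2^{\mu n/2}\cdot 2^{(1-\gamma)\mu n} = 2^{(1.5-\gamma)\mu n}$, where the first factor comes from the MITM on one $M$-half and the second from the spreading of $M$-sums over residue classes mod $p$.

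\emph{Main obstacle.} The principal difficulty is the precise accounting of the nested meet-in-the-middle running time, and in particular cleanly extracting the $2^{\gamma\mu n}$ savings from the many-sums assumption via a combinatorial bound on distinct sums at the single subset-size level used by the split. The analysis must work uniformly across $\mu \le 1/2$ (where the lemma degrades continuously to plain meet-in-the-middle as $\mu \to 0$) and $\gamma \in [0,1]$, so choosing the sieve modulus $p$ and split parameters so as to realize the stated bound in every regime is the main technical step.
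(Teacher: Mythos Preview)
Your high-level plan---combine a deterministic meet-in-the-middle with a representation sieve and use the many-sums hypothesis to control list sizes---is the right intuition, but the concrete decomposition you propose differs from the paper's and, as stated, has a genuine correctness gap.

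\textbf{Structural difference.} The paper does \emph{not} apply the representation technique to all of $[n]$. It fixes an arbitrary partition $L,R$ of $N=[n]\setminus M$ and builds lists $\mathcal L\subseteq 2^{L\cup M}$, $\mathcal R\subseteq 2^{R\cup M}$; that is, deterministic MITM on $N$, representations \emph{only} on $M$. The sieve prime is $p\approx 2^{\pi|M|}$ with $\pi=\gamma-1+\sigma$ (where $\sigma=|X\cap M|/|M|\ge 1/2$), so $p$ is tuned to the many-sums budget of $M$, not to the global $2^{n/2}$ you chose.

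\textbf{The gap.} Your correctness claim ``since $R/p=\Theta(1)$, each solution $X$ has a surviving representation with constant probability'' is a non sequitur for worst-case instances: the $R\approx 2^{n/2}$ representations $(X_1,X_2)$ may all have the \emph{same} value $w(X_1)\bmod p$, in which case a random residue hits with probability $1/p$. What is needed is that the sums $\{w(X_1)\}$ are well spread modulo $p$, and this is exactly where the many-sums hypothesis must enter. The paper's key step is the transfer
\[
|w(2^{M})|\le |w(2^{M\cap X})|\cdot |w(2^{M\setminus X})|\le |w(2^{M\cap X})|\cdot 2^{|M|-s},
\]
so $|w(2^{M\cap X})|\ge 2^{\pi|M|}$, and hence some level $s_1$ of $M\cap X$ already has $\Omega^*(2^{\pi|M|})$ distinct sums; this (plus a Cauchy--Schwarz argument over the random $p$) gives the $\Omega^*(1)$ hit probability. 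Your pigeonhole gives a level $k$ of $M$ with many sums, but that level need not lie inside $M\cap X$, so ``aligning the split parameter with this $k$'' does not yield representations of the fixed solution $X$. Because you also split $N$ by representation, you would additionally need spread on the $N$-side, for which there is no hypothesis at all.

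\textbf{Running time.} The $\beta(w)\,2^{(1.5-\gamma)\mu n}$ term in the paper is not a ``fallback''; it bounds the expected number of colliding pairs in $\mathcal L\times\mathcal R$ via the identity $|\{(P,Q)\in 2^{[n]}\times 2^M: w(P)+w(Q)=t\}|\le \beta(w)\,2^{|M|}$, divided by $p\approx 2^{\pi|M|}$. The first term comes from the expected list size $2^{|L|}\binom{|M|}{s_1}/p$ with $|L|$ chosen to balance the two sides and $h(\sigma/2)+\tfrac12-\sigma\le h(1/4)$. Your sketch (``after balancing the modulus'', ``$2^{\mu n/2}\cdot 2^{(1-\gamma)\mu n}$'') does not recover either of these computations.

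In short: restrict the representation technique to $M$, keep $N$ under a deterministic $L,R$ split, pick $p$ of order $2^{(\gamma-1+\sigma)|M|}$, and---crucially---derive the many-sums property for $M\cap X$ (not just $M$) before invoking the random residue.
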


The proof is given in Section~\ref{sect:binalgo proof}. Informally, it uses an algorithm that simultaneously applies the meet-in-the-middle technique of Horowitz and Sahni \cite{HorowitzSahni74} on the set $[n] \setminus M$ and the ``representation technique'' of Howgrave-Graham and Joux \cite{HowgraveGrahamJoux10} on the set $M$. Specifically, we pick an arbitrary equi-sized partition $L, R$ of $[n]\setminus M$ and construct lists $\cL \subseteq 2^{L \cup M}$ and $\cR \subseteq 2^{R \cup M}$. Note that without restrictions on $\cL$ and $\cR$, one solution $X$ is witnessed by $2^{|M \cap X|}$ pairs $(S,T)$ from $\cL \times \cR$ in the sense that $S \cup T = X$. Now the crux is that since $M$ generates many sums, $M\cap X$ generates many sums (say $2^{\pi |M|}$): this allows us to uniformly choose a congruence class $t_L$ of $\mathbb{Z}_p$ where $p$ is a random prime of order $2^{\pi|M|}$ and restrict attention only to sets $S \subseteq L \cup M$ and $T \subseteq R \cup M$ such that $w(S)\equiv_p t_L$ and $w(T)\equiv_p t-t_L$, while still finding solutions with good probability. This ensures that the to-be-constructed lists $\cL$ and $\cR$ are small enough. As an indication for this, note that if $|M \cap X|=|M|/2$ and $|w(\binom{M\cap X}{|M|/4})|$ is $\Omega(2^{|M|/2})$, the expected sizes of $\cL$ and $\cR$ are at most $2^{((1-\mu)/2+h(1/4)\mu-\mu/2)n}\leq 2^{(1/2-0.18\mu)n}$.

In contrast to Lemma~\ref{lem:binalgo}, it is straightforward to exploit a small subset that generates few sums:

\begin{lemma}
  \label{lemma:exploiting few sums}
  There is a deterministic algorithm that, given positive integers
  $w_1,\ldots,w_n,t$
  and a set $M \in \binom{[n]}{\mu n}$
  satisfying $\mu \leq 0.5$ and $|w(2^{M})|\leq 2^{\gamma|M|}$ for some $\gamma \in
  [0,1]$, finds a subset $X \subseteq [n]$ satisfying $w(X) = t$ (if such an $X$ exists) in time 
  $\Os\big(2^{\frac{1-\mu(1-\gamma)}{2} n}\big)$.
\end{lemma}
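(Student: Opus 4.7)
The plan is a twist on Horowitz--Sahni meet-in-the-middle in which $M$ is attached entirely to one side of the split. First I would enumerate the list $w(2^M)$ of all distinct sums from $M$, together with a witness subset for each sum, by the standard incremental dynamic program: start from $S_0=\{0\}$ and update $S_i=S_{i-1}\cup(S_{i-1}+w_{m_i})$ as a sorted list with parent pointers for witness reconstruction. Since $|S_i|\le|w(2^M)|\le 2^{\gamma\mu n}$ throughout, this costs only $\Os(|w(2^M)|)=\Os(2^{\gamma\mu n})$ time, rather than the naive $2^{\mu n}$.

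Next, I would partition $[n]\setminus M$ into $A_1,A_2$ of sizes $|A_1|=(1-\mu(1+\gamma))n/2$ and $|A_2|=(1-\mu(1-\gamma))n/2$ (rounded to integers). Both sizes are non-negative and sum to $(1-\mu)n$; non-negativity of $|A_1|$ reduces to $\mu(1+\gamma)\le 1$, which holds since $\mu\le 1/2$ and $\gamma\le 1$. Then build the sorted list
\[
  \cL=\bigl\{\,w(Y)+v\;:\;Y\subseteq A_1,\ v\in w(2^M)\,\bigr\},
\]
with each entry tagged by its witness pair $(Y,Z)$, $Z\subseteq M$, by iterating over the $2^{|A_1|}$ subsets $Y\subseteq A_1$ and, for each one, shifting the precomputed list $w(2^M)$ by $w(Y)$. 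This runs in $\Os(2^{|A_1|}\cdot|w(2^M)|)=\Os(2^{|A_1|+\gamma\mu n})=\Os(2^{(1-\mu(1-\gamma))n/2})$. Finally, I would enumerate every $X\subseteq A_2$ and binary-search $\cL$ for $t-w(X)$; a hit returns the witnessed decomposition $t=w(Y)+w(Z)+w(X)$, whence the output is $Y\cup Z\cup X$. The query phase costs $\Os(2^{|A_2|})=\Os(2^{(1-\mu(1-\gamma))n/2})$, matching the preprocessing.

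Correctness is clear because $\cL$ is exactly the image $w(2^{A_1\cup M})$. The only balancing check is that the preprocessing of $w(2^M)$ fits in the overall budget, i.e.\ $\gamma\mu\le(1-\mu(1-\gamma))/2$; this rearranges to $\mu(1+\gamma)\le 1$ and holds under the hypotheses. This is really the one place where care is needed---everything else is a routine meet-in-the-middle balancing---and the algorithm is deterministic as required.
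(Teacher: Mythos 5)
Your proposal is correct and is essentially the paper's proof: both attach $M$ to one side of a meet-in-the-middle split and choose the split point $(1-\mu(1-\gamma))n/2$ so the two lists balance. The only cosmetic difference is that the paper builds the $M$-side list $w(2^{R})$ by a single sum-set dynamic program running in time proportional to the number of distinct sums, whereas you build it as an explicit product of $2^{A_1}$ with the precomputed list $w(2^M)$; the resulting bound is identical.
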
  

\begin{proof}
  Let $L$ be an arbitrary subset of $[n] \setminus M$ of size
  $\frac{1-\mu(1-\gamma)}{2}n$ and let $R = [n]\setminus L$.
  Then $|w(2^L)| \le 2^{|L|} =2^{\frac{1-\mu(1-\gamma)}{2}n}$, and
  \[
  |w(2^R)| \le |w(2^M)|\cdot |w(2^{[n]\setminus L \setminus M})| \le 2^{\gamma \mu n} 2^{\left(1-\frac{1-\mu(1-\gamma)}{2} - \mu\right)n} = 2^{\frac{1-\mu(1-\gamma)}{2}n}.
  \]
  Now apply routine dynamic programming to construct $w(2^L)$ in time $O^*(|w(2^L)|)$ and $w(2^R)$ in time $O^*(|w(2^R)|)$; build a look-up table data structure for $w(2^L)$, and for each $x\! \in w(2^R)$, check in $O(n)$ time whether $t-x \in w(2^L)$.
\end{proof}
Given these lemmas, we are now in the position to exploit small bins:
\begin{proof}[Proof of Theorem~\ref{thm:smallbin}]
  We start by preprocessing the input with Lemma~\ref{lemma:reduce t}, taking $B = 2^{3n} \gg |w(2^{[n]})|^2$. 
  Let $\gamma=1-\epsilon/2, \mu=3\epsilon/2$, and partition $[n]$ into
  $1/\mu$ parts $M_1,\ldots,M_{1/\mu}$ of size at most $\mu n$ arbitrarily. We distinguish two cases.  First, suppose that $|w(2^{M_i})| \ge
  2^{\gamma \mu n}$ for some $M_i$ (note that this can be easily determined within the claimed time bound).  We then apply the algorithm of
  Lemma~\ref{lem:binalgo} with $M = M_i$ and solve the instance (with probability $\Omega^*(1)$) in
  time
  \[
  \Os\Big(2^{\left(0.5+0.8113\mu-\gamma\mu\right)n} + \beta(w) 2^{(1.5-\gamma)\mu n}\Big)\,.
  \]
  The coefficient of the exponent of the first term is 
$0.5 + 0.8113 \cdot 3\epsilon/2 - (1-\epsilon/2)\cdot 3\epsilon/2 
= 0.5 - 0.28305\epsilon + 0.75\epsilon^2$. 
The coefficient of the exponent of the second term is 
$0.5-\epsilon + (1.5-(1-\epsilon/2))\cdot 3\epsilon/2 
= 0.5 - \epsilon/4 + 0.75\epsilon^2$.

  Second, suppose that $|w(2^{M_i})| \le 2^{\gamma \mu n}$ for all
  $i$.  Let $L = \bigcup_{i=1}^{\frac{1}{2\mu}} M_i$ and $R = [n]
  \setminus M$.  We see that 
	%
	\[
    |w(2^L)| \leq \prod_{i \le \frac{1}{2\mu}} |w(2^{M_i})| \le 2^{\gamma n/2} 
    \qquad \textrm{and} \qquad   
    |w(2^R)| \leq \prod_{i  >  \frac{1}{2\mu}} |w(2^{M_i})| \le 2^{\gamma n/2}\,.
  \]
  Using standard dynamic programming to construct $w(2^L)$ and $w(2^R)$ in $O^*(|w(2^L)|)$ and $O^*(|w(2^R)|)$ time, we can therefore solve the instance within $\Os(2^{\gamma n /2}) = \Os(2^{(0.5-\epsilon/4)n})$ time using linear search.
\end{proof}
Exploiting large bins is easy using Lemma~\ref{lemma:large bin ub}:
\begin{proof}[Proof of Theorem \ref{thm:large bin easy}]
  Pick an arbitrary equi-sized partition $S, T$ of $[n]$.  By the
  contrapositive of Lemma~\ref{lemma:large bin ub}, one of $S$ and $T$
  generates at most $2^{(1/2-\delta)n}$ sums.  Applying
  Lemma~\ref{lemma:exploiting few sums} with the set in question as
  $M$, we get a running time of $\Os\big(2^{(1-\delta)n/2}\big)$.
\end{proof}
The proof of Lemma~\ref{lemma:large bin ub} is given in Section~\ref{sec:combinatorics}. 

\subsection{Proof of Lemma~\ref{lem:binalgo}}\label{sect:binalgo proof}

We now prove Lemma~\ref{lem:binalgo}.
Let $s := |X \cap M|$.  Without loss of generality, we may assume
that $s \ge |M|/2$ (by considering the actual target $t$ and the
complementary target $t' := w([n])-t$).  We may further assume that
$s$ is known by trying all $O(n)$ possible values.
The algorithm is listed in Algorithm~\ref{alg:manysums}.

\begin{algorithm}
  \caption{Exploiting a small subset generating many sums.}
  \label{alg:manysums}
  \begin{algorithmic}[1]
    \REQUIRE $\mathsf{A}(w_1,\ldots,w_n,t,M,s,\gamma)$\hfill\algorithmiccommentt{Assumes $|w(2^{M})|\geq 2^{\gamma|M|}$}
    \ENSURE $\mathbf{yes}$, if there exists an $X \subseteq [n]$ with $w(X)=t$ and $|X \cap M|=s$
    \STATE Let $\sigma = s/|M|$
    \STATE Let $\pi = \gamma-1+\sigma$ 
    \STATE Pick a random prime $p$ satisfying $2^{\pi |M|} \leq p \leq 2^{\pi |M|+1}$
    \STATE Pick a random number $0\leq  t_L \leq p-1$
    \FORALL{$0 \leq s_1 \leq s_2\leq |M|$ such that $s_1+s_2=s$}
    \STATE Let $\sigma_1 = s_1/|M|,\sigma_2 = s_2/|M|$ 
    \STATE Let $\lambda = (1-\mu)/2+\big(h(\sigma/2)-h(\sigma_1)\big)\mu$
    \STATE Let $L, R$ be an \emph{arbitrary} partition of $[n] \setminus M$ such that $|L|=\left\lceil \lambda n \right\rceil$
    \STATE Construct $\cL=\{ S \in 2^{L \cup M}: w(S) \equiv_p t_L \textrm{ and } |S \cap M| =s_1\}$ \label{alg1:build cL}
    \STATE Construct $\cR=\{ T \in 2^{R \cup M}: w(T) \equiv_p t-t_L \textrm{ and } |T \cap M|=s_2\}$ \label{alg1:build cR}
    \FORALL{$(S,T) \in \cL \times \cR$ such that $w(S) + w(T) = t$} \label{alg1:combine lists}
    \LineIf{$S \cap T =\emptyset$}{\algorithmicreturn\ $\mathbf{yes}$}
    \ENDFOR
    \ENDFOR
    \STATE \algorithmicreturn\ $\mathbf{no}$
  \end{algorithmic}
\end{algorithm}

\paragraph{Running time}

We focus on the expected running time of
Algorithm~\ref{alg:manysums} (by returning $\mathbf{no}$
after running the algorithm at least twice its expected number of steps, this is sufficient).
We will analyze the algorithm in two parts: (i) the
generation of the lists $\cL$ and $\cR$ on Lines~\ref{alg1:build cL}
and \ref{alg1:build cR}, and (ii) the iteration over pairs in $\cL
\times \cR$ in Line~\ref{alg1:combine lists} (the typical bottleneck).
Let $W_L := 2^{|L|}\binom{M}{s_1} \le 2^{\lambda n}
2^{h(\sigma_1)\mu n} = 2^{((1-\mu)/2 + h(\sigma/2)\mu)n}$ denote
the size of the search space for $\cL$.

\begin{proposition}
  The lists $\cL$ and $\cR$ in Lines~\ref{alg1:build cL} and
  \ref{alg1:build cR} can be constructed in expected time $\Os\big(W_L^{1/2} + W_L / 2^{\pi \mu n}\big)$, where the expectation is over the choice of $p$ and $t_L$.
\end{proposition}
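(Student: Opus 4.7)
The plan is to use meet-in-the-middle on $L \cup M$, with a randomized partition of $M$ into halves and restriction of the split-size iteration to a narrow central window.

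Partition $L = L_1 \cup L_2$ into equal halves and partition $M = M_1 \cup M_2$ into equal halves after a uniformly random relabeling of $M$. For each integer $s_{1,1}$ in the window $[s_1/2 - c\sqrt{|M|\log n},\,s_1/2 + c\sqrt{|M|\log n}]$ (for a suitable constant $c$), set $s_{1,2} = s_1 - s_{1,1}$ and enumerate the left list $\Ls_{s_{1,1}} = \{S_1 \subseteq L_1 \cup M_1 : |S_1 \cap M_1| = s_{1,1}\}$ and the symmetric right list $\Rs_{s_{1,2}}$. Store $\Ls_{s_{1,1}}$ in a hash table keyed by $w(\cdot) \bmod p$; for each $S_2 \in \Rs_{s_{1,2}}$, retrieve all $S_1$ satisfying $w(S_1) \equiv_p t_L - w(S_2)$. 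Each such matching pair contributes $S_1 \cup S_2 \in \cL$.

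For the running time, Stirling's approximation yields $\binom{|M|/2}{s_{1,1}} = \Os\bigl(\sqrt{\binom{|M|}{s_1}}\bigr)$ throughout the window, so the left and right lists each have size $\Os(W_L^{1/2})$. Since the window contains only $\Os(1)$ values of $s_{1,1}$, the total enumeration cost is $\Os(W_L^{1/2})$. Since $t_L$ is uniform modulo $p$ and $p \geq 2^{\pi\mu n}$, the expected number of candidate pairs surviving the congruence filter is $|\Ls_{s_{1,1}}|\cdot|\Rs_{s_{1,2}}|/p = \Os(W_L/2^{\pi\mu n})$ per window slice, which bounds the combining cost. Summing gives the claimed expected time $\Os(W_L^{1/2} + W_L/2^{\pi\mu n})$; the list $\cR$ is constructed symmetrically.

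The main subtlety that needs care is that restricting to the central window does not literally construct the full $\cL$ of the pseudocode, so correctness of Algorithm~\ref{alg:manysums} must rely on the window still capturing the target. This is ensured by hypergeometric concentration: under the random partition of $M$, for any fixed target $X$ with $|X \cap M| = s_1$, the value $|X \cap M_1|$ lies within $s_1/2 \pm O(\sqrt{|M|\log n})$ with probability $1 - o(1)$, which is compatible with the $\Omega^*(1)$ success probability that Lemma~\ref{lem:binalgo} guarantees.
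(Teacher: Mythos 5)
Your overall strategy (meet-in-the-middle with the mod-$p$/$t_L$ filter giving the $W_L/2^{\pi\mu n}$ term) is the right one, but your construction has a genuine gap: you do not build the lists $\cL$ and $\cR$ that the proposition (and Line~\ref{alg1:build cL}) actually refers to. By splitting $M$ into random halves and keeping only the split sizes $s_{1,1}$ in a central window, you construct a proper subfamily of $\cL$, namely those $S$ with $|S\cap M_1|\approx s_1/2$. You acknowledge this, but your repair is aimed at the wrong object: hypergeometric concentration of $|X\cap M_1|$ for the \emph{solution} $X$ is irrelevant, because what must survive the window is the witness set $Q\in\binom{M\cap X}{s_1}$ that happens to satisfy $w(Q)\equiv_p t_L-w_L$ --- and that $Q$ is selected by the congruence, not fixed in advance, so you cannot apply concentration to it directly. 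A correct repair exists (apply the concentration bound to each member of the fixed injective family $\cF$ used in the correctness proposition, and argue via Markov that a $1-o(1)$ fraction of $\cF$ is balanced, so the correctness argument goes through on the balanced subfamily), but this must actually be carried out, and it entangles the running-time proposition with the correctness proposition. A secondary quantitative slip: for $s_{1,1}=s_1/2\pm c\sqrt{|M|\log n}$ and $\sigma_1\neq 1/2$ one only has $\binom{|M|/2}{s_{1,1}}\le\sqrt{\binom{|M|}{s_1}}\cdot 2^{\Theta(\sqrt{n\log n})}$, since $\binom{m}{k+d}/\binom{m}{k}\approx((1-\sigma_1)/\sigma_1)^d$; this superpolynomial factor is not absorbed by $\Os(\cdot)$ as the paper defines it (though it would not hurt the final exponents of Lemma~\ref{lem:binalgo}). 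The product bound $|\Ls_{s_{1,1}}|\cdot|\Rs_{s_{1,2}}|\le W_L$ and hence the $W_L/2^{\pi\mu n}$ term are fine.

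For comparison, the paper avoids all of this by never splitting $M$: it picks $L_1\subseteq L$ of size roughly $\tfrac12\log_2 W_L$ and meets $2^{L_1}$ against $\{Y\cup Z: Y\subseteq L\setminus L_1,\ Z\in\binom{M}{s_1}\}$, so both halves have size about $W_L^{1/2}$, the lists $\cL,\cR$ are constructed exactly as defined, and no additional randomness or concentration argument is needed.
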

\begin{proof}
  By splitting the search space for $\cL$ appropriately, we get two
  ``halves'' each of which has size $W_L^{1/2}$. Specifically, we arbitrarily pick a subset $L_1 \subseteq L$ of size $\lambda_1 n$ with $\lambda_1=(\lambda+h(\sigma/2)\mu)/2$ and generate using brute-force $w(2^{L_1})$ and $w(\cL_2)$ where $\cL_2=\{ Y \cup Z: Y \subseteq L \setminus L_1 \textrm{ and } Z \in \binom{M}{s_1}\}$. Then we store $w(2^{L_1})$ in a dictionary data structure and, for each sum $x \in w(\cL_2)$, we look up all solutions with sum
  $t-x \bmod p$ in the dictionary of $w(2^{L_1})$ and list for such a pair its union. This yields a
  running time of $\Os(|\cL| + W_L^{1/2})$.  The expected size of
  $|\cL|$ over the random choices of $t_L$ is $\E[|\cL|] \le  O(W_L/2^{\pi \mu n})$.

  The analysis for $\cR$ is analogous and we get a running time of $\Os\big(W_R^{1/2} +
  W_R/2^{\pi \mu n}\big)$ where $W_R := 2^{|R|} \binom{|M|}{s_2}$.  Let $\rho = |R|/n$.  Since $h(\cdot)$ is concave and, in particular, $h(\sigma_1) + h(\sigma_2) \le 2h(\sigma/2)$, we then have (up to a negligible term caused by rounding $\lambda n$ to an integer)
  \begin{align*}
    \rho &= 1-\mu-\lambda = (1-\mu)/2 - \big(h(\sigma/2)-h(\sigma_1)\big)\mu \le (1-\mu)/2 + (h(\sigma/2)-h(\sigma_2))\mu\,.
  \end{align*}
  Thus the case of $R$ is symmetric to the situation for $L$ and we have
  $W_R \le 2^{((1-\mu)/2 + h(\sigma/2)\mu)n} = O^*(W_L)$.
\end{proof}

The term $W_L/2^{\pi \mu n}$ can be bounded by using the definition of $\pi = \gamma -1 + \sigma$ and we get
\begin{align*}
  W_L/2^{\pi \mu n} = 2^{(\frac{1}{2} + \mu(\frac{1}{2} + h(\sigma/2) - \gamma - \sigma))n}\,.
\end{align*}
Since $1/2+h(\sigma/2)-\sigma$ subject to $1/2 \leq \sigma \leq 1$
is maximized at $\sigma=1/2$ where it is $h(1/4) \le 0.8113$, we
have that $W_L/2^{\pi \mu n} \le 2^{(0.5 + 0.8113\mu-\gamma\mu)n}$.

The term $W_L^{1/2}$ is naively bounded by $2^{(1+\mu)n/4}$, which is dominated by the term $\Os\big(2^{\left(0.5+0.8113\mu-\gamma\mu\right)n}\big)$ since $\mu \leq 1/2$ and $\gamma \le 1$. It follows that Line~\ref{alg1:build cL} and Line~\ref{alg1:build cR} indeed run within the claimed time bounds.

\begin{proposition}
  The expected number of pairs
  considered in Line~\ref{alg1:combine lists} is $\Os\big(\beta(w)
  2^{\mu(1.5-\gamma)n}\big)$, 
  where the expectation is over the choice of $p$ and $t_L$. 
\end{proposition}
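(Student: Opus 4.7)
The plan is to first eliminate the randomness over $p$ and $t_L$, reducing to a purely combinatorial count $N$ of pairs satisfying only the weight-sum and cardinality constraints, and then to bound $N$ by using the bin-size hypothesis via a parameterization by the overlap $S \cap T$. The key observation for the first reduction is that whenever $w(S) + w(T) = t$, the two congruence conditions $w(S) \equiv_p t_L$ and $w(T) \equiv_p t - t_L$ are equivalent. Hence by linearity of expectation over a uniform $t_L \in \{0, \ldots, p-1\}$, the expected count equals $N/p$, where
\[
N := |\{(S,T) \in 2^{L \cup M} \times 2^{R \cup M} : w(S) + w(T) = t,\ |S \cap M| = s_1,\ |T \cap M| = s_2\}|\,.
\]
Since $p \ge 2^{\pi \mu n}$ with $\pi = \gamma - 1 + \sigma$, the proposition reduces to proving $N \le \Os\big(\beta(w) \cdot 2^{(1/2 + \sigma)\mu n}\big)$.

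To bound $N$, I would parameterize each pair by $A := S \cap T$. Because $L, R, M$ are pairwise disjoint and $S \subseteq L \cup M$, $T \subseteq R \cup M$, the overlap satisfies $A \subseteq M$; set $a = |A|$. Letting $Y = S \triangle T$, we have $Y \subseteq [n] \setminus A$, $w(Y) = w(S) + w(T) - 2w(A) = t - 2w(A)$, and $|Y \cap M| = (s_1 - a) + (s_2 - a) = s - 2a$. Conversely, given $A$ and $Y$, the pair $(S, T)$ is determined by choosing which $s_1 - a$ of the $s - 2a$ elements of $Y \cap M$ are placed in $S$ (the rest go to $T$), while $Y \cap L$ is forced into $S$ and $Y \cap R$ into $T$. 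The bin-size bound gives at most $\beta(w)$ admissible values of $Y$ for each $A$, so
\[
N \le \beta(w) \sum_{a=0}^{s_1} \binom{|M|}{a}\binom{s-2a}{s_1-a}\,.
\]

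Finally, I would bound this summation entropically. Using $\binom{|M|}{a} \le 2^{h(\alpha)\mu n}$ with $\alpha = a/|M|$ together with the crude bound $\binom{s-2a}{s_1-a} \le 2^{s-2a} = 2^{(\sigma - 2\alpha)\mu n}$, each summand is at most $2^{\mu n (h(\alpha) + \sigma - 2\alpha)}$. Setting the derivative of $h(\alpha) - 2\alpha$ to zero gives $\alpha = 1/5$, where the value is $h(1/5) - 2/5 \approx 0.322 < 1/2$. Hence every summand is at most $2^{(1/2 + \sigma)\mu n}$; absorbing the $O(\mu n)$ summands into the $\Os$-notation completes the bound. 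The main technical care lies in the overlap parameterization: one must correctly account for the disjointness of $L, R, M$ and for the asymmetry of the target cardinalities $s_1, s_2$, after which the entropy estimate and the standard application of the bin-size bound are routine.
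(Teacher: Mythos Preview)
Your argument is correct, but it takes a different and more elaborate route than the paper.  The paper drops the cardinality constraints $|S\cap M|=s_1$, $|T\cap M|=s_2$ entirely and observes that the map $(S,T)\mapsto\big(S\cup(T\cap R),\,T\cap M\big)$ is an injection from $\{(S,T)\in 2^{L\cup M}\times 2^{R\cup M}:w(S)+w(T)=t\}$ into $\B=\{(P,Q)\in 2^{[n]}\times 2^M:w(P)+w(Q)=t\}$.  Since $|\B|\le \beta(w)\,2^{|M|}$ (at most $\beta(w)$ choices of $P$ for each $Q\subseteq M$), one gets $N\le \beta(w)\,2^{\mu n}$ in one line; dividing by $p\ge 2^{\pi\mu n}$ and invoking $\sigma\ge 1/2$ finishes.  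Your approach instead keeps the cardinality constraints, parameterizes by the overlap $A=S\cap T$, and carries out an entropy optimization on $h(\alpha)-2\alpha$.  This yields the sharper intermediate bound $N\le\Os\big(\beta(w)\,2^{(h(1/5)-2/5+\sigma)\mu n}\big)$, which in particular does not require the hypothesis $\sigma\ge 1/2$, but the extra precision is not used in the proposition.  So your decomposition is valid and even slightly more informative, while the paper's bijection is shorter and avoids the calculus altogether.
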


\begin{proof}
  Define
  \[
  \B = \big\{ (P,Q) \in 2^{[n]} \times 2^M : w(P) + w(Q) = t\big\}\,,
  \]
  and note that the set of pairs $(S,T) \in 2^{L \cup M} \times 2^{R
    \cup M}$ satisfying $w(S) + w(T) = t$ are in one-to-one
  correspondence with pairs in $\B$ (by the map $(S, T) \mapsto (S
  \cup (T \cap R), T \cap M)$).  Furthermore, the size of $\B$ is
  bounded by $|\B| \le \beta(w) 2^{|M|}$: for each of the $2^{|M|}$
  possible choices of $Q$, there are at most $\beta(w)$ subsets $R$
  that sum to $t-w(Q)$.
  
  Any given pair $(S,T) \in 2^{L \cup M} \times 2^{R \cup M}$
  satisfying $w(L) + w(R) = t$ is considered only if $w(S) \equiv_p
  t_L$, which happens with probability $O(2^{-\pi n})$ (over the uniformly random choice of $t_L$).
  Thus the expected number of pairs considered in Line~\ref{alg1:combine lists} is upper bounded by
  $O(|B|/2^{\pi \mu n}) = O(\beta(w) 2^{\mu(1-\pi)n}) = O(\beta(w) 2^{\mu(2-\sigma-\gamma)n})$.
  Using $\sigma \ge 1/2$, the desired bound follows.
\end{proof}

\paragraph{Correctness}

Suppose there exists an $X \subseteq [n]$ with $w(X)=t$ and $|X \cap
M| = s$. Note that $2^{\gamma |M|}\leq |w(2^{M})| \leq |w(2^{M \cap
  X})| \cdot |w(2^{M \setminus X})|$, and since $|w(2^{M \setminus X})|\leq
2^{|M|-s}$, we have that $|w(2^{M \cap X})| \geq 2^{\gamma |M| -
  (1-\sigma)|M|}=2^{\pi |M|}$.

Thus there must exist positive $s_1+s_2=s$ such that
$|w(\binom{M\cap X}{s_1})| \geq 2^{\pi |M|}/|M|$. Let us
focus on the corresponding iteration of
Algorithm~\ref{alg:manysums}.  Let $w_L := w(X \cap L)$ be the
contribution of $L$ to the solution $X$.
We claim that in this iteration, the following holds.
\begin{proposition}
  \begin{equation}\label{eq:sumsmodprime}
    \Pr\left[\exists Q \in \binom{M \cap X}{s_1}: w(Q) \equiv_p t_L-w_L\right] \geq \Omega\bigg(\frac{1}{|M|}\bigg)\,.
  \end{equation}
\end{proposition}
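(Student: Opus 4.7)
The plan is to condition on $p$ first and average over $t_L$, then average over $p$. Let $S := w\bigl(\binom{M\cap X}{s_1}\bigr)$, a subset of $\mathbb{Z}$ of size at least $2^{\pi|M|}/|M|$ by the choice of $s_1$, and let $K(p) := |S \bmod p|$ denote the number of distinct residues of $S$ modulo $p$. Since $t_L$ is drawn uniformly from $\{0,\ldots,p-1\}$ independently of $p$, and since $w_L$ is a fixed integer depending only on $X$ and $L$, conditionally on $p$ the value $t_L - w_L$ is uniform modulo $p$, so the event in question has conditional probability exactly $K(p)/p$. It therefore suffices to show $\E_p[K(p)/p] = \Omega(1/|M|)$.

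To lower bound $K(p)$, let $m_1,\ldots,m_{K(p)}$ be the multiplicities of the distinct residues of $S$ mod $p$ and let $N(p) := \sum_i m_i(m_i-1)$ count ordered collision pairs. Cauchy--Schwarz yields $|S|^2 = (\sum_i m_i)^2 \le K(p)\sum_i m_i^2 = K(p)\bigl(|S|+N(p)\bigr)$, hence $K(p) \ge |S|^2/(|S|+N(p))$. For any distinct $y_1,y_2 \in S$ we have $0 < |y_1 - y_2| \le 2^{O(n)}$, so Claim~\ref{prop:random mod good} gives $\Pr_p[p \mid y_1-y_2] \le O(n)/2^{\pi|M|}$; summing over the at most $|S|^2$ ordered pairs yields $\E_p[N(p)] \le O(|S|^2 n / 2^{\pi|M|})$. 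By Markov's inequality, with probability at least $1/2$ over $p$ we have $N(p) \le O(|S|^2 n/2^{\pi|M|})$; call this the good event.

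Conditioning on the good event, I split into two cases. If $|S| \le 2^{\pi|M|}/n$ then $N(p) = O(|S|)$, so $K(p) = \Omega(|S|) \ge \Omega(2^{\pi|M|}/|M|)$; since $p = O(2^{\pi|M|})$ this gives $K(p)/p = \Omega(1/|M|)$. If instead $|S| > 2^{\pi|M|}/n$, then $|S|+N(p) = \Theta(N(p))$ and $K(p) = \Omega(2^{\pi|M|}/n)$, whence $K(p)/p = \Omega(1/n) = \Omega(1/|M|)$, using that $|M| = \mu n$ with $\mu$ a positive constant. Combining the two cases with the constant probability of the good event over $p$ finishes the proof.

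The main obstacle is the regime $|S| \gtrsim p$, where a naive union bound over $S$ to rule out collisions modulo $p$ is hopeless. The Cauchy--Schwarz bound on the residue multiplicities is what saves us: it shows that even when $S$ has many collisions, the image of $S$ in $\mathbb{Z}_p$ remains spread out enough that a uniform target hits it with probability $\Omega(1/|M|)$.
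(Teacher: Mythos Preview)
Your argument is correct and essentially identical to the paper's: both lower-bound the number of distinct residues of $S=w\bigl(\binom{M\cap X}{s_1}\bigr)$ modulo $p$ by combining Cauchy--Schwarz with an expected-collision bound from Claim~\ref{prop:random mod good} and Markov's inequality, then divide by $p$. One small imprecision: in your second case the claim $|S|+N(p)=\Theta(N(p))$ is not justified (nothing rules out $N(p)$ being tiny); what you actually need, and what does hold when $|S|>2^{\pi|M|}/n$, is $|S|+N(p)=O\bigl(|S|^2 n/2^{\pi|M|}\bigr)$, since then $|S|\le |S|^2 n/2^{\pi|M|}$ as well, and this immediately gives $K(p)=\Omega(2^{\pi|M|}/n)$.
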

Note that this is sufficient for establishing correctness of the
algorithm, since conditioned on this event,
Algorithm~\ref{alg:manysums} will include $S := Q \cup (L \cap X)$
in $\cL$ and $T := X \setminus S$ in $\cR$, and the algorithm recovers $X$.

\begin{proof}
  Let $\cF \subseteq \binom{M \cap X}{s_1}$
  be a maximal injective subset, i.e., satisfying $|\cF| = |w(\cF)| =
  |w\left(\binom{M \cap X}{s_1}\right)| \ge \Omega^*(2^{\pi |M|})$.  
  Let $c_i = |\left\{ Y \in \cF: w(Y) \equiv_p
  i\right\}|$ be the number of sets from $\cF$ in the $i$'th bin mod $p$.  Our goal is to lower bound the probability that $c_{t_L-w_L} > 0$ (where $t_L-w_L$ is taken modulo $p$).
  We can bound the expected $\ell^2$ norm (e.g., the number of collisions) by
  \begin{equation}
    \label{eqn:alg1 collision prob}
    \E\bigg[\sum_{i} c^2_i\bigg] = \sum_{Y,Z \in \cF}\Pr\big[p \text{ divides } w(Y)-w(Z)\big] \leq |\cF| + \Os\big(|\cF|^2 / 2^{\pi |M|}\big)\,,
  \end{equation}
  where the inequality uses Claim~\ref{prop:random mod good} and the
  assumption that the $w_i$'s are $2^{O(n)}$.
  By Markov's inequality, $\sum_{i} c^2_i \le \Os(|\cF|^2/2^{\pi
    |M|})$ with probability at least $\Omega^*(1)$ over the choice of
  $p$ (here we used $|\Fs| = \Omega^*(2^{\pi |M|})$ to conclude that
  the second term in \eqref{eqn:alg1 collision prob} dominates the
  first).  Conditioned on this, Cauchy-Schwarz implies that the number
  of non-zero $c_i$'s is at least $|\cF|^2\big/\sum_{i} c_i^2 \ge
  \Omega^*(2^{\pi |M|})$.  When this happens, the probability that
  $c_{t_L-w_L} > 0$ (over the uniformly random choice of $t_L$) is
  $\Omega^*(1)$.
\end{proof}

This concludes the proof of Lemma~\ref{lem:binalgo}.

\section{Combinatorial Results (Lemma~\ref{lem:sumsvsbin} and Lemma~\ref{lemma:large bin ub})}
\label{sec:combinatorics}

In this section we provide two non-trivial quantitative relations
between several structural parameters of the weights. Our results are
by no means tight, but will be sufficient for proving our main
results.

For the purposes of this section, it is convenient to use vector
notation for subset sums.  In particular, for a vector $x \in \Z^n$,
we write $x \cdot w = \sum_{i=1}^n x_i w_i$, and $x^{-1}(j) \subseteq
[n]$ for the set of $i \in [n]$ such that $v_i = j$.

Our approach to relate the number of sums $|w(2^{[n]})|$ to the
largest bin size $\beta(w)$ is to establish a connection to the notion
of Uniquely Decodable Code Pairs from information theory, defined as
follows.

\begin{definition}[Uniquely Decodable Code Pair, UDCP] If $A,B \subseteq \{0,1\}^n$ such that 
\[
	|A+B| = |\{a+b: a\in A, b \in B\}| = |A|\cdot|B|\,,
\]
then $(A,B)$ is called \textit{uniquely decodable}. Note that here addition is performed over $\mathbb{Z}^n$ (and \textbf{not} mod $\mathbb{Z}_2^n$).
\end{definition}

UDCP's capture the zero error region of the so-called \emph{binary
  adder channel}, and there is a fair amount of work on how large the
sets $A$ and $B$ can be (for a survey, see \cite[\S3.5.1]{schleger}).
The connection between UDCP's and \subsetsum{} is that a \subsetsum{}
instance that both generates many sums and has a large bin yields a
large UDCP, as captured in the following proposition.

\begin{proposition}
  \label{prop:udcp connection}
  If there exist weights $w_1, \ldots, w_n$ such that $|w(2^{[n]})| =
  a$ and $\beta(w) = b$, then there exists a UDCP $(A,B)$ with $|A| =
  a$ and $|B| = b$.
\end{proposition}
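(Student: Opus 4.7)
The plan is to produce $A$ and $B$ explicitly as sets of 0/1 indicator vectors of carefully chosen subsets of $[n]$. For $B$, take the large bin: by definition of $\beta(w) = b$, there is some value $s^\star \in \Z$ such that the family $\Bs = \{T \subseteq [n] : w(T) = s^\star\}$ has size $b$. Let $B = \{\chi_T : T \in \Bs\} \subseteq \{0,1\}^n$, so $|B| = b$. For $A$, pick one representative subset per distinct sum: for each $x \in w(2^{[n]})$ fix an arbitrary $S_x \subseteq [n]$ with $w(S_x) = x$, and set $A = \{\chi_{S_x} : x \in w(2^{[n]})\}$, which has size $|A| = a$ since the $S_x$'s have pairwise distinct $w$-sums.

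The remaining task is to verify the UDCP condition $|A+B| = |A|\cdot|B|$, i.e. that the map $(\alpha,\beta)\mapsto \alpha+\beta \in \Z^n$ is injective on $A\times B$. The key trick will be to test any claimed collision against the linear functional $v \mapsto v \cdot w$ on $\Z^n$: if $\chi_{S_x} + \chi_T = \chi_{S_{x'}} + \chi_{T'}$ as vectors in $\Z^n$, then dotting both sides with $w$ yields $w(S_x) + w(T) = w(S_{x'}) + w(T')$. Since $T,T' \in \Bs$ both have $w$-sum $s^\star$, this collapses to $w(S_x) = w(S_{x'})$, hence $x = x'$, hence $S_x = S_{x'}$ by our choice of representatives, and then $\chi_T = \chi_{T'}$ follows by subtraction, so $T = T'$.

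There is essentially no obstacle: the construction is direct, and correctness reduces to the single observation that the weight-inner-product faithfully records the $w$-sum of a 0/1 vector and hence separates the representatives in $A$ once the $B$-contribution is fixed. Note that $A+B$ lives in $\{0,1,2\}^n$ rather than $\{0,1\}^n$, but that is irrelevant for the UDCP definition, which only requires the sums to be distinct as elements of $\Z^n$.
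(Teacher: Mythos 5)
Your proposal is correct and follows essentially the same route as the paper: take $A$ to be a maximal injective set (one representative subset per distinct sum) and $B$ to be a largest bin, then verify unique decodability by applying the linear functional $v \mapsto v \cdot w$ to a claimed collision. The argument is identical in substance; no issues.
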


\begin{proof}
  Let $A \subseteq \{0,1\}^n$ be an injective set, i.e., $x\cdot w
  \neq x'\cdot w$ for all $x,x'\in A$ with $x\neq x'$. Note that there
  exists such an $A$ with $|A|=a$.  Let $B \subseteq \{0,1\}^n$ be a
  bin, i.e., $y\cdot w = y'\cdot w$ for all $y,y'\in B$.  Note that we
  can take these to have sizes $|A| = a$ and $|B| = b$.
  
  We claim that $(A,B)$ is a UDCP.  To see this, let $x,x'\in A$ and $y,y'\in B$ with $x+y=x'+y'$. Then 
  \[
  x \cdot w + y \cdot w = (x+y)\cdot w = (x'+y')\cdot w =x'\cdot w+ y' \cdot w\,.
  \]
  Thus $x \cdot w = x' \cdot w$, and so by the injectivity property of $A$,
  we have $x = x'$, which in turn implies $y=y'$ since $x+y=x'+y'$.
\end{proof}

We have the following result by Ordentlich and Shayevitz  \cite[Theorem~1, setting $R_1 = 0.997$ and $\alpha=0.07$]{DBLP:journals/corr/OrdentlichS14a}.
\begin{theorem}[\cite{DBLP:journals/corr/OrdentlichS14a}]
\label{thm:udcpbound}
Let $A,B \subseteq \{0,1\}^n$ such that $(A, B)$ is a UDCP and $|A|\geq 2^{.997n}$. Then $|B|\leq 2^{0.4996n}$.
\end{theorem}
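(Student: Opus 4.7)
The plan is to use the standard information-theoretic approach for uniquely decodable codes. Pick $X$ uniform on $A$ and $Y$ independent uniform on $B$. The UDCP property means that $X+Y\in\{0,1,2\}^n$ deterministically recovers $(X,Y)$, so
\[
\log_2|A|+\log_2|B|=H(X)+H(Y)=H(X+Y)\le\sum_{i=1}^n H(X_i+Y_i).
\]
Writing $p_i=\Pr[X_i=1]$ and $q_i=\Pr[Y_i=1]$, the coordinatewise entropy is a fixed two-variable function $\phi(p_i,q_i)$ of the marginals, and we also have the single-letter subadditivity bounds $\log_2|A|\le \sum_i h(p_i)$ and $\log_2|B|\le \sum_i h(q_i)$.

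These ingredients reduce the problem to a constrained optimization: given $\tfrac{1}{n}\sum_i h(p_i)\ge 0.997$, what is the largest possible value of $\tfrac{1}{n}\sum_i h(q_i)$ compatible with the combined entropy inequality? I would first try to symmetrize to the case $p_i\equiv p$, $q_i\equiv q$ via Jensen, after checking that $\phi$ is concave on the effective range; note that $\sum_i h(p_i)\ge 0.997n$ forces the $p_i$'s to cluster near $1/2$, so only a small neighborhood of $p=1/2$ is relevant, which is exactly where one can hope concavity does hold. The resulting scalar problem
\[
\max\bigl\{h(q):h(p)\ge 0.997 \text{ and } h(p)+h(q)\le \phi(p,q)\bigr\}
\]
is a routine numerical optimization. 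A quick sanity check at $p\approx 0.532$ and $q\approx 0.45$ gives $\phi(p,q)\approx 1.497$ and $h(q)\approx 0.5$, which is in the right ballpark of the claimed $0.4996$.

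The main obstacle I expect is the symmetrization step: the function $\phi$ fails to be globally concave on $[0,1]^2$, so blindly averaging $(p_i,q_i)$ across coordinates is not valid. The cleanest remedy is a type-class argument in the style of the method of types, partitioning the $n$ coordinates into groups with nearly identical marginals and optimizing over the finite-dimensional simplex of type proportions; this turns the problem into a maximization over a compact set that can be attacked by Lagrange multipliers plus a local concavity check near the optimum. A close alternative is to introduce a time-sharing random variable $U$ uniform on $[n]$ indexing the coordinate and single-letterize using $(X_U,Y_U)$ with $U$ as side information; this is the standard information-theoretic move used in Ordentlich--Shayevitz and it bypasses the concavity issue by absorbing the convex combination into the joint distribution. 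Either way, the numerical step at the end is where the specific constants $0.997$ and $0.4996$ come from.
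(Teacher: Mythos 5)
This theorem is not proved in the paper at all: it is imported verbatim from Ordentlich and Shayevitz (their Theorem~1 with $R_1=0.997$, $\alpha=0.07$), so the only thing to check is whether your proposed proof could actually establish it. It cannot: the plain Shannon single-letterization you describe is the classical bound for uniquely decodable code pairs, and it provably saturates at $|B|\le 2^{n/2}$ rather than going below it. Concretely, $H(X_i+Y_i)=\phi(p_i,q_i)\le \tfrac{3}{2}$ for all marginals, with the maximum attained at $p_i=q_i=\tfrac12$ where $\phi(\tfrac12,q)=1+\tfrac12 h(q)$. Your chain of inequalities therefore gives $0.997+\tfrac1n\log_2|B| \le \tfrac1n\sum_i\phi(p_i,q_i)\le 1+\tfrac12\cdot\tfrac1n\sum_i h(q_i)\le \tfrac32$, i.e.\ $|B|\le 2^{0.503n}$; even in the limit $|A|=2^n$ you only get $|B|\le 2^{0.5n}$. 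The auxiliary constraint $\log_2|B|\le\sum_i h(q_i)$ enters with the wrong sign (it upper-bounds $\log_2|B|$ by a quantity that appears on the \emph{helpful} side of the inequality), so it cannot be used to push the bound below $\tfrac12$. Your own sanity check landing at $h(q)\approx 0.5$ is the symptom: the method pins you exactly at the threshold that the theorem must beat, and beating $\tfrac12$ is the entire point --- Lemma~\ref{lem:sumsvsbin} and Theorem~\ref{cor:manysums} need $0.4996<0.5$ strictly.

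The concavity and time-sharing issues you anticipate are therefore not the real obstacle; no amount of care in the symmetrization step rescues an inequality whose optimum is $\tfrac32$. Ordentlich and Shayevitz need a genuinely different, combinatorial ingredient to break the $2^{n/2}$ barrier: roughly, since $|A|\ge 2^{0.997n}$, the Sauer--Shelah lemma forces $A$ to shatter a linear-size set of coordinates (this is where the parameter $\alpha$ comes from), and unique decodability against a shattered set imposes much stronger constraints on $B$ than the entropy inequality records; an additional counting/encoding step over the remaining coordinates then yields a rate strictly below $\tfrac12$. (Compare the encoding argument via $|A+B_\sigma|=|A|\cdot|B_\sigma|$ in the proof of Proposition~\ref{prop:large bin l2}, which is in a similar spirit.) If you want to prove the theorem yourself rather than cite it, you need to start from an idea of that kind, not from $H(X)+H(Y)=H(X+Y)$.
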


With this connection in place, the proof of Lemma~\ref{lem:sumsvsbin}
is immediate.

\begin{replemma}{lem:sumsvsbin}
  If $|w(2^{[n]})| \ge 2^{0.997n}$, then $\beta(w) \le 2^{0.4996n}$.
\end{replemma}

\begin{proof}
  Combine Theorem~\ref{thm:udcpbound} with the contrapositive of
  Proposition~\ref{prop:udcp connection}.
\end{proof}

The remainder of this section is devoted to Lemma~\ref{lemma:large bin ub}:

\begin{replemma}{lemma:large bin ub}
  There is a universal constant $\delta > 0$ such that the following
  holds for all sufficiently large $n$.  Let $S, T$ be a partition of $[n]$ with $|S| = |T| = n/2$
  such that $|w(2^S)|, |w(2^T)| \ge 2^{(1/2-\delta)n}$.  Then $\beta(w)
  \le 2^{0.661n}$.
\end{replemma}

The proof also (implictly) uses a connection to Uniquely Decodable Code Pairs, but here the involved sets of strings are not binary. 
There is no reason to believe that the constant $0.661$ is tight. However, because
a random instance $w$ of density $2$ satisfies the hypothesis for all 
partitions $S,T$ and has $\beta(w)\approx 2^{0.5n}$ with good
probability, just improving the constant $0.661$ will not suffice for
settling Open Question~\ref{q:fast}. 

\subsection{Proof of Lemma~\ref{lemma:large bin ub}}

For a subset $S \subseteq [n]$, define a function $b_S: \Z
\rightarrow \Z$ by letting $b_S(x)$ be the number of subsets of $S$
that sum to $x$. 
Note that $|w(2^S)|$ equals the support size of
$b_S$, or $\|b_S\|_0$, and that $\beta_w(S) = \max_{x} b_S(x) =
\|b_S\|_{\infty}$.  Instead of working with these extremes, it is more convenient to work with the $\ell^2$ 
norm of $b_S$, and
the main technical claim to obtain Lemma~\ref{lemma:large bin ub} is
the following.
\begin{proposition}
  \label{prop:large bin l2}
  There exists a $\delta > 0$ such that for all sufficiently large $|S|$ the following holds: if $|w(2^{S})| \ge 2^{(1-\delta)|S|}$, then $\|b_S\|_2 \le 2^{0.661|S|}$.
\end{proposition}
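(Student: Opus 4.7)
The plan is to bound $\|b_S\|_2^2 = \sum_x b_S(x)^2$, the number of ordered pairs of subsets of $S$ with equal $w$-value, by combining the binary UDCP bound of Lemma~\ref{lem:sumsvsbin} with a non-binary analogue. I would first rewrite
\[
\|b_S\|_2^2 \;=\; \sum_{\substack{v \in \{-1,0,1\}^S \\ v \cdot w = 0}} 2^{|\{i : v_i = 0\}|}\,,
\]
by encoding each pair $(y, y')$ with $w(y) = w(y')$ via its difference $v = y - y'$, noting that the common part $y \cap y'$ can be chosen freely on the $|\{i : v_i=0\}|$ coordinates where $v_i = 0$.

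Next, I would construct a non-binary UDCP. Let $A \subseteq \{0,1\}^S$ be a maximum $w$-injective set, so $|A| = |w(2^S)| \ge 2^{(1-\delta)|S|}$, and let $B = \{v \in \{-1, 0, 1\}^S : v \cdot w = 0\}$. A short check shows $(A, B)$ is a UDCP in the alphabet $\{-1, 0, 1, 2\}$: if $a + v = a' + v'$, then $w(a - a') = w(v' - v) = 0$, so $a = a'$ by the $w$-injectivity of $A$, and hence $v = v'$. The standard entropy argument $H(A+B) \le |S|\log 4$ gives $|A| \cdot |B| \le 4^{|S|}$, so $|B| \le 2^{(1+\delta)|S|}$. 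Separately, applying Lemma~\ref{lem:sumsvsbin} to the weights restricted to $S$ (valid for $\delta \le 0.003$) yields $\|b_S\|_\infty \le 2^{0.4996|S|}$.

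The main obstacle is to convert these two ingredients into the claimed $\|b_S\|_2 \le 2^{0.661|S|}$. Naively using $\|b_S\|_2^2 \le \|b_S\|_\infty \cdot \|b_S\|_1 = \|b_S\|_\infty \cdot 2^{|S|}$ gives only $\|b_S\|_2 \le 2^{0.7498|S|}$, which is not strong enough. To close this gap, I would decompose $B$ according to the support size of $v$, writing $B_k = \{v \in B : |\{i : v_i = 0\}| = k\}$ and $\|b_S\|_2^2 = \sum_k |B_k|\cdot 2^k$; then bound each level by interpolating between (i)~the global UDCP bound $|B_k| \le 2^{(1+\delta)|S|}$ (useful when the support of $v$ is large), and (ii)~a bound $|B_k| \le \binom{|S|}{|S|-k}\cdot \|b_{T}\|_\infty$ via Lemma~\ref{lem:sumsvsbin} applied to the support $T$ of $v$ (useful when the support is small, as $|w(2^T)|$ inherits a lower bound from $|w(2^S)| \ge 2^{(1-\delta)|S|}$). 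Optimizing the resulting expression $\sum_k|B_k|\cdot 2^k$ over $k$ and ruling out the ``bang-bang'' configurations of $b_S$ that would simultaneously saturate the UDCP and bin-size bounds should yield the stated exponent $0.661$, which the authors remark is not believed to be tight.
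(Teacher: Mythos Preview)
Your initial rewrite of $\|b_S\|_2^2$ as a weighted sum over $v \in \{-1,0,1\}^S$ with $v \cdot w = 0$ is correct and matches the paper, as does the non-binary UDCP construction $(A,B)$ with $|A+B| = |A|\cdot|B|$. The level-set decomposition by support size is also exactly what the paper does.

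The gap is in the final step. Your two bounds are simply too weak to reach $0.661$. Writing $m = |S|-k$ for the support size and $\sigma = m/|S|$, your bound (ii) combined with Lemma~\ref{lem:sumsvsbin} gives
\[
|B_k|\cdot 2^k \;\le\; \binom{|S|}{m}\cdot 2^{0.4996 m}\cdot 2^{|S|-m} \;\le\; 2^{(1 + h(\sigma) - 0.5004\sigma + o(1))|S|}\,,
\]
which is maximized near $\sigma \approx 0.414$ at exponent $\approx 1.77$, i.e.\ $\|b_S\|_2 \lesssim 2^{0.886|S|}$. Your bound (i), $|B_k| \le 2^{(1+\delta)|S|}$, only helps when $k$ is tiny and never beats this. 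No interpolation of (i) and (ii) can reach $1.322$; the claim that ``optimizing \ldots\ should yield $0.661$'' does not survive the computation.

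What you are missing is a refinement of the crude inclusion $A + B_\sigma \subseteq \{-1,0,1,2\}^S$. The paper observes that since $|A| \ge 2^{(1-\delta)|S|}$, for almost every pair $(x,y) \in A \times B_\sigma$ the coordinates of $x$ are split roughly half-half across each of $y^{-1}(-1)$, $y^{-1}(0)$, $y^{-1}(1)$. Consequently the sum $z = x+y$ has frequency distribution close to $(\tfrac{\sigma}{4},\, \tfrac12-\tfrac{\sigma}{4},\, \tfrac12-\tfrac{\sigma}{4},\, \tfrac{\sigma}{4})$ over the alphabet $\{-1,0,1,2\}$, and hence lies in a set of size $2^{(1+h(\sigma/2)+o_\delta(1))|S|}$ rather than $4^{|S|}$. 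This yields $|B_\sigma| \le 2^{(h(\sigma/2)+o_\delta(1))|S|}$, and then
\[
\|b_S\|_2^2 \;\le\; \max_\sigma 2^{(1 + h(\sigma/2) - \sigma + o_\delta(1))|S|} \;=\; 2^{(h(1/5) + 3/5 + o_\delta(1))|S|} \;\le\; 2^{1.322|S|}\,,
\]
attained at $\sigma = 2/5$. This ``balanced pairs'' entropy argument, not Lemma~\ref{lem:sumsvsbin}, is what drives the exponent down to $0.661$.
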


Using Proposition~\ref{prop:large bin l2}, the desired bound of
Lemma~\ref{lemma:large bin ub} follows immediately, since
\[
\beta([n]) \,=\, \max_{x \in \Z} \sum_{y \in \Z} b_S(y)b_T(x-y) 
\,\le\, \max_{x \in \Z} \|b_S\|_2 \|b_T\|_2 
\,\le\, 2^{0.661n}\,,
\]
where the first inequality is by Cauchy--Schwarz and the second
inequality by Proposition~\ref{prop:large bin l2}.


\begin{proof}[Proof of Proposition~\ref{prop:large bin l2}]
  Without loss of generality we take $S = [n]$, and to simplify
  notation we omit the subscript $S$ from $b_S$ and simply write
  $b: \Z \rightarrow \Z$ for the function such that $b(r)$ is the number of subsets of $w_1, \ldots, w_n$ summing to $r$. Note that
  \begin{align*}
    \|b\|_2^2 & = \sum_{U,V \subseteq [n]} [w(U)=w(V)]
    = \sum_{\substack{U, V \subseteq [n]\\U \cap V = \emptyset}} [w(U) = w(V)] \cdot 2^{n-|U|-|V|}
    = \sum_{y \in \{-1,0,1\}^n} [y \cdot w = 0] \cdot 2^{|y^{-1}(0)|},
  \end{align*}
	where $[p]$ denotes $1$ if $p$ holds and $0$ otherwise. 
  Defining $B_\sigma = \big\{y \in \{-1,0,1\}^n \,:\, y \cdot w = 0 \textrm{ and } \|y\|_1 = \sigma n\big\}$, we thus have
  \begin{equation}
    \label{eqn:collision bound Bsigma}
  \|b\|_2^2 = \sum_{i=0}^n |B_{i/n}| 2^{n-i} \le n \max_{\sigma} |B_\sigma| 2^{(1-\sigma)n}\,.
  \end{equation}
  We now proceed to bound the size of $B_\sigma$ by an encoding
  argument. To this end, let $A \subseteq \{0,1\}^n$ be a maximal
  injective set of vectors.  In other words, $|A| = |w(2^{[n]})| \ge
  2^{0.99n}$, and for all pairs $x \ne x' \in A$, it holds that $x
  \cdot w \ne x' \cdot w$.  We claim that $|A + B_\sigma| = |A| \cdot
  |B_\sigma|$. To see this note that, similarly to the proof of Proposition~\ref{prop:udcp connection}, if $x+y = x'+y'$ (with
  $x,x'\in A$ and $y,y' \in B_\sigma$) then $x \cdot w = x'
  \cdot w$ (since $y' \cdot w = 0$) and thus $x = x'$ and $y =
  y'$.

  Define $P_{\sigma}$ to be all pairs $(x,y)$ in $A \times
  B_\sigma$ that are \emph{balanced}, in the sense that for some $\gamma >0$ the following conditions hold:
  \begin{equation}\label{eq:balancedcondition}
	\begin{aligned}
    |x^{-1}(1) \cap y^{-1}(-1)| &= \tfrac{1}{2}|y^{-1}(-1)| \pm \gamma n\,, \\
    |x^{-1}(1) \cap y^{-1}(0)| &= \tfrac{1}{2}|y^{-1}(0)| \pm \gamma n\,, \\
    |x^{-1}(1) \cap y^{-1}(1)| &= \tfrac{1}{2}|y^{-1}(1)| \pm \gamma n\,.
  \end{aligned}
  \end{equation}

  \begin{claim}\label{clm:balancedpairs}
	For $\gamma=\sqrt{\delta}$ and $n$ sufficiently large, we have that $|P_{\sigma}| \ge |A| \cdot |B_\sigma| /2 $.
  \end{claim}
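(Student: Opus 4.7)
The plan is a standard concentration argument: we fix $y \in B_\sigma$, show that for almost every $x \in \{0,1\}^n$ the pair $(x,y)$ satisfies the three balance conditions~\eqref{eq:balancedcondition}, and then transfer this to uniform $x \in A$ using the fact that $A$ is very dense in $\{0,1\}^n$ (density at least $2^{-\delta n}$).

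First I would fix $y \in B_\sigma$ and set $k_j := |y^{-1}(j)|$ for $j \in \{-1,0,1\}$. For a uniformly random $z \in \{0,1\}^n$ the random variable $Z_j := |z^{-1}(1) \cap y^{-1}(j)|$ is distributed as $\mathrm{Bin}(k_j, 1/2)$, so Hoeffding's inequality gives
\[
\Pr\!\left[\,\bigl|Z_j - k_j/2\bigr| > \gamma n\,\right] \;\le\; 2\exp(-2\gamma^2 n^2 / k_j) \;\le\; 2\exp(-2\gamma^2 n),
\]
where the last inequality uses $k_j \le n$ (with the bound being vacuous but still valid when $k_j = 0$). A union bound over $j \in \{-1,0,1\}$ shows that a uniform $z \in \{0,1\}^n$ violates at least one of the three balance conditions~\eqref{eq:balancedcondition} with probability at most $6\exp(-2\gamma^2 n)$.

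The second step transfers this to $x$ uniform in $A$. Since $|A| = |w(2^{[n]})| \ge 2^{(1-\delta)n}$, any event in $\{0,1\}^n$ can have its probability under the uniform distribution on $A$ inflated by at most a factor $2^n / |A| \le 2^{\delta n}$. Hence, for uniform $x \in A$,
\[
\Pr[\,(x,y) \notin P_\sigma\,] \;\le\; 2^{\delta n} \cdot 6 \exp(-2\gamma^2 n) \;=\; 6 \cdot 2^{(\delta - 2\gamma^2/\ln 2)\,n}.
\]
Plugging in $\gamma = \sqrt{\delta}$ makes the exponent equal $-(2/\ln 2 - 1)\delta n \approx -0.885\,\delta n$, so for all sufficiently large $n$ this probability is below $1/2$. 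Equivalently, for every $y \in B_\sigma$ at least $|A|/2$ elements $x \in A$ satisfy $(x,y) \in P_\sigma$, and summing over $y$ yields $|P_\sigma| \ge |A|\cdot|B_\sigma|/2$.

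There is no real obstacle here; the only point to be careful about is that the Hoeffding bound must remain valid in the corner cases where some $k_j$ is $0$ or equal to $n$, but in those situations the deviation event is either impossible or still controlled by the same bound, so the argument goes through uniformly in $y$.
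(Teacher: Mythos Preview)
Your argument is correct and essentially identical to the paper's: both fix $y$, bound the number of $x\in\{0,1\}^n$ violating a balance condition by a $2^{(1-2\gamma^2/\ln 2)n}$-type term, and compare this to $|A|\ge 2^{(1-\delta)n}$. The paper phrases the tail bound via the entropy estimate $h(1/2-\alpha)\le 1-2\alpha^2/\ln 2$ while you invoke Hoeffding, which amounts to the same thing. (One small slip: $2/\ln 2-1\approx 1.885$, not $0.885$, but the sign is what matters and the argument is unaffected.)
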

  \begin{proof}
We prove the claim by giving an upper bound on the number of pairs that are not balanced. Note that if a pair $(x,y)$ is not balanced, there must exist a $j\in\{-1,0,1\}$ such that~\eqref{eq:balancedcondition} fails. Let us proceed with fixing a $y\in B_\sigma$ and upper bounding the number of $x \in A$ such that
\begin{equation}
  \label{eqn:unbalanced pair}
	\left| |x^{-1}(1) \cap y^{-1}(j)| - \tfrac{1}{2}|y^{-1}(j)| \right| > \gamma n\,.
  \end{equation}
Recall the following basic property of the binary entropy function.
  \begin{fact}
    \label{fact:binary entropy estimate around half}
    For all $\alpha \in \big[0,\tfrac{1}{2}\big]$, it holds that $h\big(\tfrac{1}{2} - \alpha\big) \le 1 - 2\alpha^2/\ln 2$.
  \end{fact}
Using this bound, the number of $x$ satisfying \eqref{eqn:unbalanced pair} is at most 
\[
	n\max_{\alpha \leq 1/2-\gamma }2^{h(\alpha)n} \leq n2^{(1 - 2\gamma^2/\ln 2)n}\,,
\]
and thus the number of pairs not balanced it at most $3n2^{(1 - 2\gamma^2/\ln 2)n}|B_\sigma|$. This is clearly at most a constant fraction of $|A|\cdot |B_\sigma|$ by the assumption $\gamma = \sqrt{\delta}$.	
	\end{proof}
		
  Setting $\gamma=\sqrt{\delta}$,
  we can now proceed to upper bound
  $|P_{\sigma}|$.  Consider the encoding $\eta: P_\sigma \rightarrow
  \{-1,0,1,2\}^n$ defined by $\eta(x,y) = x + y$.  By the
  property $|A + B_{\sigma}| = |A| \cdot |B|$, it follows that $\eta$
  is an injection, and thus $|P_{\sigma}|$ equals the size of the
  image of $\eta$.  For a pair $(x,y) \in P_\sigma$, if $y \in
  B_{\sigma}$ has $\tau \sigma n$ many $1$'s, and $(1-\tau)\sigma n$
  many $-1$'s, then $z = \eta(x,y)$ has the following frequency distribution:
  \begin{align*}
    \frac{|z^{-1}(-1)|}{n} &= \frac{\tau \sigma}{2} \pm o_\gamma(1)\,, &
    \frac{|z^{-1}(0)|}{n} &= \frac{\tau \sigma}{2} + \frac{1-\sigma}{2} \pm o_\gamma(1)\,, \\
    \frac{|z^{-1}(1)|}{n} &= \frac{1-\sigma}{2} + \frac{(1-\tau)\sigma}{2} \pm o_\gamma(1)\,, &
    \frac{|z^{-1}(2)|}{n} &= \frac{(1-\tau)\sigma}{2} \pm o_\gamma(1)\,,
  \end{align*}
	where, for a variable $\epsilon$, we write $o_\epsilon(1)$ to indicate a term that converges to $0$ when $\epsilon$ tends to $0$.
  Since $\gamma = \sqrt{\delta}$, we have $o_\gamma(1) = o_\delta(1)$.  The number of $z$'s with such a frequency distribution is bounded by
  \begin{equation}
    \label{eqn:z freq bound 1}
    \binom{n}{
      \tfrac{\tau \sigma}{2} n,
      (\tfrac{\tau\sigma}{2} + \tfrac{1-\sigma}{2})n,
      (\tfrac{1-\sigma}{2} + \tfrac{(1-\tau)\sigma}{2})n,
      \tfrac{(1-\tau)\sigma}{2}}
    2^{o_\delta(1)n}\,.
  \end{equation}
	
  Then, $|P_\sigma|$ is bounded by
  \[
  \log |P_{\sigma}| \le \max_{\tau \in [0,1]} \big(g(\sigma, \tau) + o_{\gamma}(1)\big)n, \text{ where } g(\sigma, \tau) = 
  h\Big(
    \tfrac{\tau \sigma}{2},
    \tfrac{\tau \sigma}{2} + \tfrac{1-\sigma}{2},
    \tfrac{1-\sigma}{2} + \tfrac{(1-\tau)\sigma}{2},
    \tfrac{(1-\tau)\sigma}{2}\Big)\,.
  \]
  It can be verified that $g(\sigma, \tau)$ is
  maximized for $\tau=1/2$ and we have
  \[
  \max_{\tau \in [0,1]} g(\sigma, \tau) = 
  h\big(
    \tfrac{\sigma}{4},
    \tfrac{1}{2}-\tfrac{\sigma}{4},
    \tfrac{1}{2}-\tfrac{\sigma}{4},
    \tfrac{\sigma}{4}\big) = 1 + h\big(\tfrac{\sigma}{2}\big)\,.
  \]
  Combining this with the bounds $|P_\sigma| \ge |A| \cdot |B| \cdot 2^{-O(\delta^2)n}$ and $|A| \ge 2^{(1-\delta)n}$, we get that $|B_\sigma| \le 2^{(h(\sigma/2) + o_\delta(1))n}$.
  Plugging this into \eqref{eqn:collision bound Bsigma} we see that
  \[
  \|b\|_2^2 \le \max_{\sigma} 2^{(1 + h(\sigma/2) - \sigma + o_\epsilon(1))n}\,.
  \]
  The expression $h(\sigma/2)-\sigma$ is maximized at $\sigma=2/5$, and we obtain
  \[
  \|b\|_2^2 \,\le\, 2^{(h(1/5)+3/5 + o_\epsilon(1))} \,\le\, 2^{(1.32195 + o_\delta(1))n}\,.
  \]
  Thus if $\delta$ is sufficiently small, we have $\|b\|_2^2 \le 2^{1.322n}$, as desired.
\end{proof}

\section{Proof of Theorem~\ref{thm:densityreduction}}\label{sec:density}

\begin{proof}[Proof of Theorem~\ref{thm:densityreduction}]

  Given oracle access to an algorithm that solves \subsetsum{}
  instance of density at least $1.003$ in constant time, we solve an
  arbitrary instance $w_1,w_2,\ldots,w_n,t$ of \subsetsum{} in time
  $O^*(2^{0.49991n})$ as follows.

  As Step 1, run the algorithm of Theorem~\ref{cor:manysums} for
  $\Theta^*(2^{0.49991n})$ timesteps. If it terminates within this
  number of steps, return YES if it found a solution and NO
  otherwise.
  Otherwise, as Step 2, run the preprocessing of Lemma~\ref{lemma:reduce
    t} with $B = 10 \cdot 2^{0.997n}$.  This yields a new instance
  with density $1/0.997 > 1.003$, which we solve using the presumed oracle for
  such instances.  If the oracle returns a solution, we verify that it
  is indeed a solution to our original instance and if so return YES.
  Otherwise we return NO.

  If there is no solution this algorithm clearly returns NO. If there
  is a solution and $|w(2^{[n]})|\geq 2^{0.997n}$, we find a solution
  with inversely polynomial probability in Step 1. If there is a
  solution and $|w(2^{[n]})|\leq 2^{0.997n}$, Property~\ref{hashlemma:correctness} of
  Lemma~\ref{lemma:reduce t} guarantees that the solution to the reduced
  instance is a solution to the original instance with probability
  $\Omega^*(1)$, and the oracle will then provide us with the solution.
\end{proof}

\section{Further Discussion}\label{sec:discussion}

Our original ambition was to resolve Open Question~\ref{q:fast} affirmatively by a combination of two algorithms that exploit small and large concentration of the sums, respectively. Since we only made some partial progress on this, it remains an intruiging question whether this approach can fulfill this ambition. In this section we speculate about some further directions to explore.

\paragraph{Exploiting Large Density}
For exploiting a density $1.003\leq d \leq 2$, the meet-in-the-middle technique~\cite{HorowitzSahni74} does not seem directly extendable. A different, potentially more applicable $O^*(2^{n/2})$ algorithm works as follows: pick a prime $p$ of order $2^{n/2}$, build the dynamic programming table that counts the number of subsets with sum congruence to $t$ mod $p$, and use this as a data structure to uniformly sample solutions mod $p$ with linear delay; try $O^*(2^{n/2})$ samples and declare a no-instance if no true solution is found (see also Footnote~\ref{footnote:manysoln}). As such, this does not exploit large density at all, but to this end one could seek a similar sampler that is more biased to smaller bins.

\paragraph{Sharper Analysis of Algorithm~\ref{alg:manysums}}
The analysis of Algorithm~\ref{alg:manysums} in Lemma~\ref{lem:binalgo}, and in particular the typical bottleneck $\beta(w)2^{(1.5-\gamma)\mu n}$ in the running time, is quite naive. For example, since we can pick $M$ as we like (and assume it generates many sums), for the algorithm to fail we need an instance where big bins are encountered by the algorithm for many choices of $M$. It might be a good approach to first try to extend the set of instances that can be solved `truly faster' in this way, e.g. to the set of all instances with $\beta(w)\leq 2^{(.5+\delta)n}$ for some small $\delta > 0$.

As an illustration of the looseness, let us mention that in a previous version of this manuscript, we used a more sophisticated analysis to show the following: there exists some $\delta >0$, such that if $|w(2^{[n]})| \geq 2^{(1-\delta)n}$, then
\[
	|\{(P,Q) \in \binom{[n]}{n/2}^2: w(P)+w(Q)= t \}| \leq 2^{0.5254n}.
\]
We used this to show that all instances with $|w(2^{[n]})| \geq 2^{(1-\delta)n}$ can be solved via a mild variant of Algorithm~\ref{alg:manysums} with $M=[n]$, indicating that Algorithm~\ref{alg:manysums} gives non-trivial algorithms even for large $M$.

\paragraph{Sharper Combinatorial Bounds}

Lemma~\ref{lem:sumsvsbin} and Lemma~\ref{lemma:large bin ub} seem to be rather crude estimates. In fact, we don't even know the following (again, borrowing notation from the proof of Proposition~\ref{prop:large bin l2}):

\begin{openquestion}\label{q:sumsvsbins}
  Suppose $|w(2^{[n]})| \geq 2^{(1-\epsilon)n}$. Can $\beta(w)$ and $\|b_{[n]}\|_2$ be bounded by $2^{o_\epsilon(1) n}$ and $2^{(0.5+o_\epsilon(1)) n}$, respectively? 
\end{openquestion}

Note that the second bound would follow from the first bound.
Furthermore, if the second bound holds, we would be able to solve, for
all $\epsilon >0$, all instances with $|\beta(w)| \geq
2^{(0.5+\epsilon)n}$ in time $\Os(2^{(0.5-\epsilon')n})$ for some
$\epsilon'>0$ depending on $\epsilon$, via the proof of
Theorem~\ref{thm:large bin easy}.

In recent work~\cite{AKKNUDCP} we proved the following modest progress
\begin{lemma}
  There exists $\delta >0$ such that if $A,B \subseteq \{0,1\}^n$ is a UDCP and $|A|\geq 2^{(1-\delta)n}$, then $|B|\leq 2^{0.4115n}$.
\end{lemma}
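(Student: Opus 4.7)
My plan is to adapt the type-and-encode strategy of Proposition~\ref{prop:large bin l2} to the UDCP setting, and then identify where an additional ingredient is needed. Let $X$ be uniform on $A$ and $Y$ independent uniform on $B$. Because $(a,b)\mapsto a+b$ is injective on $A\times B$, the UDCP condition yields the identity $\log|A|+\log|B| = H(X)+H(Y) = H(X,Y) = H(X+Y)$, so a sharp upper bound on $H(X+Y)$ combined with $\log|A|\geq(1-\delta)n$ should translate to the desired bound on $|B|$.

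First I would reduce to a balanced situation. Since $|A|\ge 2^{(1-\delta)n}$, subadditivity of entropy gives $\sum_i h(p_i)\ge(1-\delta)n$ for the marginals $p_i:=\Pr[X_i=1]$, and concavity of $h$ then forces all but an $O(\sqrt{\delta})$-fraction of coordinates to satisfy $p_i=1/2\pm O(\delta^{1/4})$. A standard symmetrization (e.g., a random coordinate-wise complementation preserving $A$ in expectation) lets me reduce to the case where every $p_i$ is exactly $1/2$, losing only a $2^{o_\delta(n)}$ factor in $|A|$.

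Stratify $B$ by Hamming weight: let $B_q:=\{b\in B:|b|=qn\}$. For any $b\in B_q$, a Chernoff bound applied to $X$ shows that all but a $2^{-\Omega(n)}$-fraction of $a\in A$ give the balanced overlap pattern $|a^{-1}(1)\cap b^{-1}(1)|\approx qn/2$ and $|a^{-1}(1)\cap b^{-1}(0)|\approx(1-q)n/2$, in which case $a+b\in\{0,1,2\}^n$ has composition $((1-q)n/2,n/2,qn/2)$. Injectivity of $(a,b)\mapsto a+b$ restricted to this bulk of balanced pairs then gives
\[
  |A|\cdot|B_q| \;\leq\; \binom{n}{(1-q)n/2,\;n/2,\;qn/2}\cdot 2^{o_\delta(1)\,n}
  \;=\; 2^{(1+h(q)/2+o_\delta(1))n},
\]
so $|B_q|\leq 2^{(h(q)/2+o_\delta(1))n}$ and, summing over the $O(n)$ weight classes, the same bound holds for $|B|$.

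The main obstacle is that $\max_q h(q)/2=1/2$, attained at $q=1/2$, so this weight-by-weight argument alone only recovers the meet-in-the-middle exponent $1/2$ (essentially Ordentlich--Shayevitz) and does not yet reach $0.4115$. To push below $1/2$ I would exploit the \emph{global} disjointness of the UDCP: for all distinct $b,b'\in B$ the sets $A+b$ and $A+b'$ are disjoint, equivalently $B-B$ avoids $(A-A)\setminus\{0\}\subseteq\{-1,0,1\}^n$. The plan is to translate this condition into an extremal constraint on $B-B$, for instance via a Delsarte-style linear program on the Hamming scheme over $\{-1,0,1\}^n$ or via a second-moment/spectral bound on the pair distribution of $B$, whose optimal value (certified by an explicit feasible dual solution) upper-bounds $\log|B|/n$ by $0.4115$. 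Combining this global bound with the per-weight type argument yields the stated improvement. Constructing and numerically certifying the extremal dual object is the heart of the argument and the main technical step, carried out in the companion work~\cite{AKKNUDCP}.
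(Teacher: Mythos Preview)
The paper does not contain a proof of this lemma; it is merely quoted as a result from the companion work~\cite{AKKNUDCP}, so there is no in-paper argument to compare against.

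Your proposal is not a proof of the stated bound either. The concrete part of your argument---the type-counting over balanced pairs $(a,b)$ with $|b|=qn$---is essentially the computation in the proof of Proposition~\ref{prop:large bin l2} specialized to binary $B$, and you correctly obtain $|B_q|\le 2^{(h(q)/2+o_\delta(1))n}$, which maximizes at $q=1/2$ and yields only $|B|\le 2^{(1/2+o_\delta(1))n}$. You explicitly acknowledge this and then defer the entire improvement from $1/2$ to $0.4115$ to an unspecified ``Delsarte-style LP or second-moment/spectral bound'', ultimately citing~\cite{AKKNUDCP} yourself. That deferred step is not a technicality; it is the whole content of the lemma. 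The observation $(B-B)\cap(A-A)=\{0\}$ you isolate is correct and is a natural starting point, but converting it into the exponent $0.4115$ requires the substantive argument of~\cite{AKKNUDCP}, which you have not supplied. (As a minor aside, the ``random coordinate-wise complementation preserving $A$'' you invoke to force every $p_i=1/2$ is both unclear---complementing a coordinate need not preserve the UDCP property unless done simultaneously in $A$ and $B$, and then it does not center the $A$-marginals at $1/2$---and unnecessary: the Chernoff/type argument already goes through with $p_i=1/2\pm o_\delta(1)$ on all but an $o_\delta(1)$-fraction of coordinates.)
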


Plugging this into the proof of Lemma~\ref{lem:sumsvsbin}, this gives that $\beta(w) \leq 2^{(0.4115+o_\epsilon(1))n}$ in the setting of Open Question~\ref{q:sumsvsbins}. We would like to remark that improving this beyond $2^{(0.25+o_\epsilon(1))n}$ via Lemma~\ref{lem:sumsvsbin} is not possible since UDCP pairs $(A,B)$ with $|A|\geq 2^{(1-o(1))n}$ and $|B|\geq 2^{n/4}$ do exist~\cite{Kasami}.

One may also wonder whether we can deal with instances with $|w(2^{[n]})|\geq 2^{(0.5+\epsilon)n}$, for all $\epsilon >0$ by arguing $\beta(w)$ must be small but this does not work directly: there are instances with $|w(2^{[n]})|=3^{n/2}$ and $\beta(w)=2^{n/2}$ (the instance $1,1,3,3,9,9,27,27,\ldots$ has this, though is it easily attacked via Lemma~\ref{lemma:exploiting few sums}).

\subsection*{Acknowledgements}

This research was funded in part by
the Swedish Research Council, Grant 621-2012-4546 (P.A.), 
the European Research Council, Starting Grant 338077 
``Theory and Practice of Advanced Search and Enumeration'' (P.K.),
the Academy of Finland, Grant 276864 ``Supple Exponential Algorithms'' (M.K.), 
and by the NWO VENI project 639.021.438 (J.N.).
\small
\bibliographystyle{plain}
\bibliography{paper}

\end{document}